\def\namedlabel#1#2{\begingroup
   \def\@currentlabel{#2}%
   \label{#1}\endgroup
\makeatother}
\newtheorem{theorem}{Theorem}
\newtheorem{lemma}[theorem]{Lemma}
\newtheorem{corollary}[theorem]{Corollary}
\newtheorem{claim}[theorem]{Claim}
\newtheorem{example}[theorem]{Example}
\newtheorem{definition}[theorem]{Definition}
\crefname{theorem}{Theorem}{Theorems}
\crefname{lemma}{Lemma}{Lemmas}
\crefname{corollary}{Corollary}{Corollaries}
\crefname{claim}{Claim}{Claims}
\crefname{definition}{Definition}{Definitions}
\crefname{example}{Example}{Examples}
\crefname{figure}{Figure}{Figures}
\newcolumntype{C}[1]{>{\centering\arraybackslash}m{#1}}
\newcolumntype{L}[1]{>{\arraybackslash}m{#1}}
\DeclareRobustCommand{\gen}{\text{\reflectbox{$\neg$}}}
\newcommand{\pre}[1]{\protect\accentset{\smash{\raisebox{-0.14ex}{\scalebox{.9}{\hspace{.04em}$\mathrlap{\scriptscriptstyle\bm\gen}\hspace{-0.02em}\scriptscriptstyle\bm\gen$}}}}{#1}}
\newcommand{\suf}[1]{\protect\accentset{\smash{\raisebox{-0.14ex}{\scalebox{.9}{$\mathrlap{\scriptscriptstyle\bm\neg}\hspace{0.02em}\scriptscriptstyle\bm\neg$\hspace{.04em}}}}}{#1}}
\renewcommand{\inf}[1]{\protect\accentset{\smash{\raisebox{0.155ex}{\scalebox{1}{\rule{.36em}{.4pt}\hspace{.00em}}}}}{#1}}
\newcommand{\loo}[1]{\protect\accentset{\smash{\raisebox{-0.10ex}{\scalebox{.6}{$\scriptscriptstyle\bm\circ$}}}}{#1}}
\newcommand{\tr}[3][1.5em]{\xrightarrow[{\makebox[#1]{\raisebox{(\heightof{$m$}-\heightof{$#3$})/2+.4ex}[0ex][0ex]{$#3$}}}]{\makebox[#1]{\raisebox{(\heightof{$#2$}-\heightof{$m$}*98/99)*1/5-0.25ex}[0ex][0ex]{$#2$}}}}
\newcommand{\N}{\mathbb{N}}
\newcommand{\nfa}{\ensuremath{\textrm{NFA}}\xspace} %
\newcommand{\nft}{\ensuremath{\textrm{NFT}}\xspace} %
\newcommand{\dft}{\ensuremath{\textrm{DFT}}\xspace} %
\newcommand{\dfa}{\ensuremath{\textrm{DFA}}\xspace} %
\newcommand{\tdfa}{\ensuremath{\textrm{2t-DFA}}\xspace} %
\newcommand{\ftdfa}{\ensuremath{\textrm{fv-2t-DFA}}\xspace} %
\newcommand{\fnft}{\ensuremath{\textrm{fv-NFT}}\xspace} %
\newcommand{\lnft}{\ensuremath{\lambda\textrm{-NFT}}\xspace} %
\newcommand{\funtdfa}{\ensuremath{\textrm{f-2t-DFA}}\xspace} %
\newcommand{\funnft}{\ensuremath{\textrm{f-NFT}}\xspace} %
\newcommand{\blank}{\textnormal{\texttt{\char32}}}
\newcommand{\shortcut}{\ensuremath{g}\xspace}
\newcommand{\outputspeed}{\ensuremath{s}\xspace}
\newcommand{\traillength}{\ensuremath{t}\xspace}
\newcommand{\homestretch}{\ensuremath{h}\xspace}
\newcommand{\pairset}{\ensuremath{P}\xspace}
\newcommand{\buffersize}{\ensuremath{r}\xspace}
\newcommand{\move}{\ensuremath{m}\xspace}
\newcommand{\caseA}{\ensuremath{{\begin{tikzpicture}[scale=1.5,
line width=.4pt]\draw (0,.16)--+(.08,-.16)--+(-.08,-.16)--cycle;
\node () at (.03,.075) {};\end{tikzpicture}}}}
\newcommand{\boldcaseA}{\ensuremath{{\begin{tikzpicture}[scale=1.5,
line width=.8pt]\draw (0,.16)--+(.08,-.16)--+(-.08,-.16)--cycle;
\node () at (.03,.075) {};\end{tikzpicture}}}}
\newcommand{\indexcaseA}{\ensuremath{{\begin{tikzpicture}[scale=0.8,
line width=.04pt]\draw (0,.16)--+(.08,-.16)--+(-.08,-.16)--cycle; 
\end{tikzpicture}}}}
\newcommand{\caseB}{\ensuremath{{\begin{tikzpicture}[scale=1.5,
line width=.4pt]\draw (0,0)--+(.08,.16)--+(-.08,.16)--cycle; 
\node () at (.03,.075) {};\end{tikzpicture}}}}
\newcommand{\boldcaseB}{\ensuremath{{\begin{tikzpicture}[scale=1.5,
line width=.8pt]\draw (0,0)--+(.08,.16)--+(-.08,.16)--cycle; 
\node () at (.03,.075) {};\end{tikzpicture}}}}
\newcommand{\indexcaseB}{\ensuremath{{\begin{tikzpicture}[scale=0.8,
line width=.04pt]\draw (0,0)--+(.08,.16)--+(-.08,.16)--cycle; 
\end{tikzpicture}}}}
\title{From Finite-Valued Nondeterministic Transducers to 
Deterministic Two-Tape Automata} 
\author{\IEEEauthorblockN{Elisabet Burjons}
\IEEEauthorblockA{Department of Computer Science\\
RWTH Aachen, Germany\\
burjons@cs.rwth-aachen.de}
\and
\IEEEauthorblockN{Fabian Frei}
\IEEEauthorblockA{Department of Computer Science\\
ETH Zürich, Switzerland\\
fabian.frei@inf.ethz.ch}
\and
\IEEEauthorblockN{Martin Raszyk}
\IEEEauthorblockA{Department of Computer Science\\
ETH Zürich, Switzerland\\
martin.raszyk@inf.ethz.ch}}
\begin{document}
\maketitle

\begin{abstract}
The question whether P equals NP revolves around the discrepancy 
between active production
and mere verification by Turing machines.
In this paper, we examine the analogous problem for 
finite transducers and automata.
Every nondeterministic finite transducer defines a binary relation
associating each input word with all output words that 
the transducer can successfully produce on the given input.
Finite-valued transducers are those for which there is 
a finite upper bound on the number 
of output words that the relation associates with every input word.
We characterize 
finite-valued, functional, and unambiguous 
nondeterministic transducers 
whose relations can be verified by 
a deterministic two-tape automaton,
show how to construct such an automaton if one exists,
and prove the undecidability of the criterion. 
\end{abstract} 

\begin{IEEEkeywords}
Finite-Valued Transducers, Two-Tape Automata, Production 
versus Verification by Finite Machines, Undecidable Verifiability 
Criterion.
\end{IEEEkeywords}
\medskip
\section*{Formal Verification}
All major results presented in this paper were also 
formally proved in Isabelle/HOL~\cite{NPW02}, a theorem prover
using higher-order logic,
ensuring their correctness beyond any reasonable doubt. The 
formalizations, including 
an explanation of how to interpret them and details on how they 
connect to the individual definitions and theorems in the 
present paper, are available at 
\url{https://github.com/lics21-automata/automata}.
\medskip
\medskip
\section{Introduction}
\indent\textbf{Automata.}
One of the simplest computation models is a finite automaton 
reading any given input word from left to right while 
deterministically updating its state according to a transition 
function that depends on the current state and symbol that is 
currently read; once the input word has been read completely, it is 
either accepted or rejected depending on the state in which the 
automaton ends up. 
We refer to such an automaton as 
a deterministic finite automaton (\dfa).
For a nondeterministic finite automaton (\nfa), in contrast, the 
transition function may permit an arbitrary number of transitions  
from the current state instead of exactly one; the input word is 
then accepted if it is accepted for any of the possible choices of 
transitions. 
The set of words accepted by an automaton $A$ is 
its recognized language $L(A)$. 
It is well-known that deterministic and 
nondeterministic finite automata both recognize the same class
of \emph{regular} languages~\cite{RS59}; that is, 
nondeterminism does not add any computational power
to the finite automaton model.

\medskip
\textbf{Two-Tape Automata.}
One way to generalize finite automata is the multi-tape model.
Here, an automaton has multiple tapes, each with one head that 
can read the word written on the tape from left to right.
There are several different but equivalent ways to model 
multi-tape automata, most notably the Rabin-Scott model and 
the Turing machine model.
In the Rabin-Scott model, only one of the heads is reading 
a symbol and advancing to the next one in each step; 
the current state determines on which tape this happens. 
The computation is either accepted or rejected once all heads 
have reached the end of the tape. 
In the Turing machine model, in contrast, the multi-tape automaton
reads all symbols at the current head positions simultaneously a
nd can then move forward any subset of the heads in each step.
In this paper, we only examine deterministic finite automata with 
two tapes (\tdfa{}s).

\medskip
\textbf{Transducers.}
\emph{Transducers} were introduced by Elgot and Mezei~\cite{EM65} 
as a generalization of automata. 
A transducer reads the given input word symbol by symbol, 
just like an automaton, but additionally produces 
a separate output word while doing so. 
During each transition reading one input symbol, the transducer 
produces a possibly empty sequence of output symbols. 
The final output word is obtained by concatenating 
the outputs of all transitions. 
After transducing the entire input word into an output word, the 
transducer decides whether the computation is accepting or not 
depending on the current state. 
In this paper, we are mainly interested in nondeterministic finite 
transducers 
(\nft{}s). For further context, we will also 
consider deterministic finite transducers (\dft{}s) and an 
extension of \nft{}s that allows for $\lambda$-transitions 
(\lnft{}s). Allowing $\lambda$-transitions means that the 
transducer can produce output not only when reading 
an input symbol but also on the empty word $\lambda$, that is, 
at any time without reading any part of the input.

\medskip
\textbf{Functionality and Finite-Valuedness.}
A deterministic transducer associates any given input word with the 
output word produced on it if the corresponding computation is 
accepting. 
A nondeterministic transducer can have multiple accepting 
computations on an input word, associating it with all 
corresponding output words. 
Denoting the input and output alphabet by $\Sigma$ and $\Gamma$, 
respectively, a transducer's language is a relation 
between $\Sigma^\ast$ and $\Gamma^\ast$. 
For two-tape automata, we analogously consider the tapes 
as containing an input over $\Sigma$ and an output over $\Gamma$, 
respectively. Thus, the language of a two-tape automaton is 
a relation $L\subseteq \Sigma^\ast\times \Gamma^\ast$ as well.

Given such a relation, 
we say that an input $a\in \Sigma^\ast$ is \emph{associated} 
with an output $u\in\Gamma^\ast$ if $(a,u)\in L$. 
The relation $L$ is \emph{functional} if any input is 
associated with at most one output. 
Note that such an $L$ describes a function that 
may be only partial. 
Given a $k\in\N$, we say that $L$ is \emph{$k$-valued} 
if it associates at most $k$ outputs 
with every input. 
If there is a $k\in\N$ such that $L$ is $k$-valued, 
we call $L$ finite-valued. 
We call two-tape automata and transducers functional 
or finite-valued if their languages are. 
We denote the restriction of machine model M to functional 
or finite-valued ones by f-M and fv-M, respectively. 
Moreover, we denote by $\mathcal{L}(\textnormal{M})$ 
the family of languages recognizable by machines of model M. 

See \cref{fig:modelcompare} for an overview of the 
hierarchy of languages that are of interest to us. 
The vertical inclusions in it are trivial since 
functionality implies finite-valuedness. 
The strictness follows for instance from the relations  
$\{(0,0),(0,00),\dots,(0,0^k)\}$---which is $k$-valued but not 
functional for any $k>1$---and the not even 
finite-valued $\{\,(0,0^i)\mid 
i\in\N\,\}$ since they are 
recognizable by \tdfa{}s as well as \nft{}s. 

It is worth pointing out that determinism implies functionality for 
transducers but not for two-tape automata.

\medskip
\textbf{Motivation.}
The main motivation underlying all computation models is to study 
their expressive power, that is, the set of languages recognized 
by machines in each model. 
For instance, \dft{}s are strictly less expressive than 
functional \nft{}s. 
A characterization of the languages that can be 
recognized in one given model but not the other is also of 
interest. 

For example, a classical result by Choffrut shows that satisfying 
the so-called \emph{twinning property (condition de jumelage)} is 
a both necessary and sufficient condition 
for a functional \nft{} to be determinizable~\cite{CH77}. 
Choffrut also introduced the notion of functions \emph{with bounded 
variation (à variation bornée)}---called \emph{uniformly bounded} 
by others~\cite{RS91}---which characterizes the functions 
recognized 
by determinizable \nft{}s.
Later, this criterion of bounded variation was proved to be 
decidable in polynomial 
time~\cite{WK95}. 

Our goal is to develop an analogous property that characterizes 
functional \nft{}s having an equivalent \tdfa, to examine the 
decidability or undecidability of this criterion, and provide an 
algorithm that transforms an \nft{}s into an equivalent \tdfa 
whenever possible.

\begin{figure}
\setlength{\tabcolsep}{4pt}
\centering
\begin{tabular}{cccccc}
&&&\cite{FR68}&&\\
&&$\mathcal{L}(\tdfa)$&\colorbox{lightgray!25}{$\subsetneq$}&$\mathcal{L}(\lambda\textrm{-}\nft)$&\\
&&\rotatebox[origin=c]{90}{$\subsetneq$}&{\hspace*{-1ex}\raisebox{-0.3ex}{[Thm.~\ref{thm:undecidable}]}\hspace*{-1ex}}&\rotatebox[origin=c]{90}{$\subsetneq$}&\hspace*{.5em}\cite{W89}\\
&&$\mathcal{L}'(\ftdfa)$&\colorbox{lightgray!25}{$\subsetneq$}&$\mathcal{L}(\fnft)$&\\
&\raisebox{-0.3ex}{\cite{CH77}}&\rotatebox[origin=c]{90}{$\subsetneq$}&{\hspace*{-1ex}\raisebox{-0.3ex}{[Thm.~\ref{thm:undecidable}]}\hspace*{-1ex}}&\rotatebox[origin=c]{90}{$\subsetneq$}&\hspace*{.5em}\cite{S75}\\
$\mathcal{L}(\dft)$&$\subsetneq$&$\mathcal{L}'(\funtdfa)$&\colorbox{lightgray!25}{$\subsetneq$}&$\mathcal{L}(\funnft)$&
\end{tabular}
 \caption{The strict hierarchy of languages recognized by relevant 
 machine models. The inclusions and their strictness are mostly
 trivial; they are briefly discussed in the introduction.
 For each inclusion, there is the following decision problem: 
 Given an element of the superset, is it contained in the subset?
 The shaded inclusions are exactly those associated 
 with an undecidable problem. 
 The references point to the proofs of decidability 
 and undecidability, respectively. 
 The decidability for the two vertical inclusions is an immediate 
 consequence of the same result for the two right 
 ones.
}
 \label{fig:modelcompare}
\end{figure} 

\medskip
\textbf{Inclusion Hierarchy.} To provide the context for our main 
endeavor, we prove all inclusions in \cref{fig:modelcompare}, going 
from left to right. 
The inclusion $\mathcal{L}(\dft)\subseteq\mathcal{L}(\tdfa)$ has 
been formally proved by Aho et al.~\cite[Thms.~3.1 and~3.2]{AHU69}. 
The intuition is simple: 
Given a \dft{}, we can construct an equivalent \tdfa{} that 
simply simulates the \dft{} on the first tape
and checks whether the output of the \dft{} matches 
the content of the second tape.
It is easy to construct an example, such as the following, showing 
that the inclusion is strict.

\begin{example}
We present the concrete language that can be recognized by an 
\funtdfa but not by a \dft. 

Consider the relation that maps any binary string $w$ to $0^{|w|}$ 
if the last symbol of $w$ is $0$ and to $1^{|w|}$ otherwise. 
It is easy to construct a functional \tdfa that verifies this 
relation as follows. 

The automaton advances its two heads synchronously, always 
comparing the current output symbol with the previous one. If there 
is a discrepancy at any point, the automaton rejects; otherwise, 
the output word is guaranteed to be uniform. If the two heads reach 
the end of the input and output word at the same time and the last 
two symbols match, then the automaton accepts; otherwise, it 
rejects. 

A \dft cannot compute this relation, however, 
because it cannot output any symbol until it knows the last symbol 
of the word on the first tape,
and by that time, the transducer cannot recall the length of the 
input word, which equals the number of copies of the 
last symbol that the transducer needs to produce.
\end{example}  

To prove the inclusion 
$\mathcal{L}(\tdfa)\subseteq \mathcal{L}(\lnft)$, 
we can construct a transducer that guesses the output and then 
verifies it by simulating the automaton. 
Here, we need to consider transducers with 
$\lambda$-transitions---i.e., transducers that 
can produce output without reading any input symbol---for 
the inclusion to hold because a 
general two-tape automaton can advance its output head 
arbitrarily many times while its input head is standing still.

Trying to prove the two inclusions 
$\mathcal{L}(\ftdfa)\subseteq \mathcal{L}(\fnft)$ 
and  
$\mathcal{L}(\funtdfa)\subseteq 
\mathcal{L}(\funnft)$, 
we can essentially let the transducer guess the 
output and then verify it by simulating the 
automaton.
This works because the output head of a 
finite-valued 
or functional 
automaton can advance by at most $n$ positions,
where $n$ denotes the number of the automaton's
states, while the input head is standing still. 
Otherwise, one could pump in a non-empty
output subword on which the automaton loops 
without changing the input word.
However, one needs to exclude computations on the empty input word 
$\lambda$ since the transducer cannot produce (and 
thus guess) any 
output without reading an input symbol.
For this technical reason, we introduce the 
following notation used in \cref{fig:modelcompare}. 
We denote by  
$\mathcal{L}'(\ftdfa)=\{L\smallsetminus(\{\lambda\}\times\Gamma^\ast)\mid
L\in \mathcal{L}\}$ the set of 
\ftdfa{}-recognizable 
relations with all pairs $(\lambda, x)$ 
associating the empty 
input $\lambda$ to any output removed, and 
$\mathcal{L}'(\funtdfa)$ is 
defined analogously.

\textbf{Question of Decidability.}
We now turn our attention to the corresponding 
decidability questions.  
Whether a machine is finite-valued is decidable even 
in the case of nondeterministic transducers~\cite{W89}.
We can also decide whether the given \fnft is functional 
or not~\cite{S75}. 
It follows from these two results that we can decide 
the finite-valuedness and functionality for two-tape automata 
as well. 

Via a reduction from the Post correspondence problem, 
Fischer and Rosenberg proved 
that, for any $k\ge2$, it is undecidable whether 
a $k$-tape \nfa has any equivalent $k$-tape \dfa~\cite{FR68}.
In particular, given a $\textrm{2t-}$\nfa, it is 
undecidable whether there is an equivalent \tdfa. 
This can be reformulated with \lnft{}s, as
they are equivalent to $\textrm{2t-}$\nfa{}s. %
Their proof relies on relations that are not finite-valued, however.
In this paper, we extend Fischer and Rosenberg's 
undecidability result~\cite{FR68} to the finite-valued, 
functional, and even unambiguous case. 
The undecidability also implies the strictness 
of the corresponding inclusions. 

For the remaining decision problem, we can construct 
from the given \funtdfa an equivalent \nft, 
as long as we consider automata accepting relations 
with non-empty input words, as described above.
Finally, we can check whether the described \nft, which is 
equivalent to the given \ftdfa, has an equivalent \dft via the 
decidable twinning property~\cite{CH79}.\medskip

\medskip
\textbf{Alternative Terminology.}
We mention some alternative terminology that will not be relevant 
in presentation of our paper but can be found in the literature. 
Deterministic transducers are sometimes called \emph{sequential} 
transducers. This is because their transductions---which are unique 
for every given input word---can be processed sequentially, symbol 
by symbol. 

There is also the notion of \emph{subsequential} transducers. 
These are sequential transducers with one additional feature: 
After reading the entire input word and having reached a final 
state, they can append to the output one more word that may depend 
on the final state. 

The difference between sequentiality and subsequentiality is 
an overall rather insubstantial technical detail. 
It is easy to see that sequential and subsequential
become 
equivalent under the assumption of a unique symbol marking the end 
of every input word. 

A function is called sequential or subsequential if it is 
recognized as the relation of a sequential or subsequential 
transducer, respectively. 
Finally, a relation (and in particular a function) is called 
\emph{rational} if it is recognized by a finite nondeterministic 
transducer.

We can now restate our motivating result by Choffrut in his own 
terminology: 
Rational functions with bounded variation are exactly the 
subsequential functions. 

It is also possible to extend the notion of subsequentiality to 
nondeterministic transducers. 
However, the additional option to add a suffix to the output 
depending on the final state makes even less of a difference if 
nondeterministic transitions are allowed. This is because the 
transducer can always guess that the current symbol is the last 
with an additional transition option including the final suffix 
for every situation. Whenever 
the guess turns out to be wrong, the corresponding computation path 
can be 
aborted in the following step. 
This works on all input words except for the empty one: 
On the empty input word $\lambda$ a transducer cannot perform even 
a single  
transition and is thus unable to produce any output at all without 
subsequentiality. 
However, even with subsequentiality an \nft can only associate 
finitely many 
output words with the input $\lambda$; the difference is thus 
negligible and we may just ignore it by excluding all pairs 
$\{\,(\lambda,u)\mid u\in\Gamma^\ast\,\}$ in the relation from 
consideration.

\medskip
\textbf{Further Model Extensions.}
We remark that both multi-tape automata and transducers can be 
extended to the \emph{two-way} model that allows the heads to move 
not only forward but also back, from right to 
left~\cite{C12,CES17}. 
Filiot et al.~\cite{FGRS13} showed that it is decidable 
whether a two-way \dft{} can be transformed into 
an equivalent one-way \dft and provide a characterization of 
the two-way \dft{}s for which this is possible. 
It is known that two-way \dft{}s compute exactly 
those relations that are expressible as
monadic second-order logic string transductions~\cite{EH99}.
Rabin and Scott~\cite{RS59} proved that two-way automata are 
as powerful as one-way automata for a single tape 
but recognize strictly more languages with multiple tapes.

Yao and Rivest have shown that increasing the number of heads on a 
single tape yields a strict hierarchy of languages~\cite{YR78}; see 
also the survey by Holzer et al.~\cite{HKM11}.
A more recent result by Raszyk et al.~\cite{RBT19} proves that 
\dft{}s with multiple heads are 
strictly more expressive than \funnft{}s. 
Asking whether multi-head two-way \dfa{}s can simulate two-head 
one-way \nfa{}s is equivalent to the famous question whether 
$\textrm{L}=\textrm{NL}$~\cite{Su75}.

In the present paper, we exclusively consider the one-way model 
with 
one head per tape.

\medskip
\textbf{Our Contributions.}
We compare the computational power of \tdfa{}s and \fnft{}s. 
To this end, we introduce the notion of transducers having 
\emph{bounded trailing} and show that this property characterizes 
\fnft{}s whose relations are computable by a \tdfa. In other 
words: Any given \fnft{} (and thus in particular any functional 
\nft) 
has an equivalent \tdfa if and only if the \fnft{} 
has bounded trailing. 

This result is analogous to the already mentioned classical result 
by 
Choffrut~\cite{CH79}. He introduced the notion of a transducer 
satisfying the \emph{twinning property} and 
proved that this property characterizes functional \nft{}s
having an equivalent \dft{}.

We show how our notion of bounded trailing can be seen as a 
relaxation of another property due to 
Choffrut~\cite{CH79}, namely functions \emph{with bounded 
variation}. 

Furthermore, we prove the question whether a given transducer has 
bounded trailing to be undecidable.
This stands in contrast to the twinning property, which was proven 
to be a 
polynomial-time decidable criterion by Weber and Klemm~\cite{WK95}. 

Finally, we provide a concrete construction that is guaranteed to 
transform any 
given \fnft into an equivalent \tdfa whenever possible, that is, if 
the \fnft has bounded trailing, despite this criterion being 
undecidable. 

We have formally proved all major results of this paper 
(\cref{lem:bvtt}, \cref{thm:bt-sufficient}, \cref{clm:red-func}, 
\cref{clm:bt-undec}, and \cref{thm:bt-necessary} for functional 
\nft{}s)
as well as \cref{ex:constr,ex:bt} in the proof
assistant Isabelle~\cite{formalization} to ensure their 
correctness. 
We also provide pen-and-paper proofs for all of our results.

\section{Preliminaries}\label{sec:prel}
We formally describe the automaton model and 
transducer model considered in this paper 
and then introduce some useful terminology.

There are many equivalent ways of extending 
the one-tape automaton model to multiple tapes. 
We base our definition on the Turing machine model. 

\begin{definition}\label{def:tdfa}%
A \emph{two-tape deterministic finite automaton} (\tdfa), 
is a septuple $A=(Q,\Sigma,\Gamma,\blank,\delta,q_0,F)$ 
consisting of
\begin{itemize}
\item a finite, nonempty \emph{set of states} $Q$, 
\item a finite, nonempty \emph{input alphabet} $\Sigma$, 
\item a finite, nonempty \emph{output alphabet} $\Gamma$, 
\item a \emph{blank symbol} $\blank\notin \Sigma\cup\Gamma$,
\item an \emph{initial state} $q_0\in Q$,
\item a \emph{set of accepting states} $F\subseteq Q$, and  
\item a \emph{transition function} $\delta$ satisfying the 
following properties: 
\begin{itemize}
\item $
\begin{array}[t]{l}
\delta\colon\ Q\times
(\Sigma\cup\{\blank\})\times(\Gamma\cup\{\blank\})\\
\hspace*{1.5em}\to\hspace*{-0em}
Q\times\{\textnormal{Stay},\textnormal{Advance}\}
\times\{\textnormal{Stay},\textnormal{Advance}\},\\
\hspace*{1.5em}
(q,\sigma,\gamma)
\hspace*{0em}\mapsto\hspace*{0em}
(q',\move_1,\move_2),
\end{array}$
\item $\sigma=\blank\implies\move_1=\textnormal{Stay}$\\(i.e., the 
first head stops at the end delimiter $\blank$), 
\item $\gamma=\blank\implies\move_2=\textnormal{Stay}$ (i.e., the 
second
head stops at the end delimiter $\blank$), and 
\item 
$\move_1=\textnormal{Advance}\;\lor\;\move_2=\textnormal{Advance}$ 
(i.e., at least one head advances in every step).
\end{itemize}
\end{itemize}
\end{definition}

The computation of a \tdfa{} proceeds as follows.
There are two tapes that we call the \emph{input} and 
\emph{output} tape;  
the former contains a word $a\in\Sigma^\ast$, 
the latter a word $u\in \Gamma^\ast$. 
Both words are delimited by the blank symbol $\blank$ at the end. 
The two tapes have one reading head each, 
which we refer to as the \emph{input} and \emph{output} head. 
The input and output head are initially positioned 
on the first symbol of $a$ and $u$, respectively. 
Depending on the current state of the automaton and 
what symbols the two heads are reading,
either the input or output head or both advance 
by one symbol in each step.
As soon as a head reaches the blank symbol delimiting a word, 
it cannot move any further.
The computation ends when both heads have reached 
the end of the input.
A word pair $(a,u)$ is in the language accepted 
by the automaton if and only if the computation on 
this pair of words ends in an accepting state. 
A \emph{configuration} consists of the positions of 
the two heads and the current state. 
We write $q\tr {\raisebox{-.2ex}{$a$}}uq'$ if the automaton 
can start in a state $q$ and end up in a state $q'$
by reading a word $a$ with the input head and $u$ with 
the output head.

We now formally define a nondeterministic finite transducer.

\begin{definition}%
A \emph{nondeterministic finite transducer} (\nft) is a sextuple $T=(Q,\Sigma,\Gamma,\delta,\allowbreak q_0,F)$, where $Q$, $\Sigma$, $\Gamma$, $q_0$, and $F$ are defined as for a \tdfa in \cref{def:tdfa},
but the \emph{transition function} $\delta$
now is a function that maps each pair $(q,\sigma)\in Q\times \Sigma$ of a state and input symbol to a \emph{finite subset} of $Q\times \Gamma^\ast$, describing the nondeterministic transition options.
\end{definition}

The computation of an \nft proceeds like the computation of 
an \nfa except that the \nft produces in each step a 
possibly empty sequence of output symbols. 
The outputs from the single steps are concatenated to obtain 
the final output of the complete computation. 
Using the visualization of a two-tape machine, we can describe 
the computation of an \nft
as reading the word on the input tape symbol-wise 
while writing to the initially empty output tape, 
appending each step's output. 
We write 
$q\tr[.8em] {\raisebox{-.2ex}{$a$}}u q'$ if the transducer 
transitions from state $q$ to state $q'$ with the input head 
reading $a\in\Sigma^\ast$ and the output head 
producing $u\in\Gamma^\ast$. 

The computation ends once the entire input word has been read, 
that is, when the input head has reached the blank symbol. 
If the transducer is in an accepting state at this moment, 
then the word pair $(a,u)$
on the two tapes is in the relation $L(T)$ computed by $T$.
If the transition function does not offer any option for a step and 
thus forcibly ends the computation before the blank symbol on the 
input tape is reached, then this computation does not contribute to 
the relation $L(T)$. 

A binary relation $R\subseteq \Sigma^\ast\times\Gamma^\ast$ is 
\emph{finite-valued} if a constant bounds the number of outputs 
$u\in\Gamma^\ast$ per input $a\in\Sigma^\ast$, that is, if 
$\exists\, k\in\N\colon\,\forall\, a\in\Sigma^\ast\colon\quad 
|\{\,u\in\Gamma^\ast\mid (a,u)\in R\,\}|\le k$.
\begin{definition}%
An \nft $T$ is a \emph{finite-valued} nondeterministic finite 
transducer ($\fnft$) if $L(T)$ is a finite-valued relation. 
If $L(T)$ is even functional, then $T$ is a \emph{functional} 
nondeterministic finite transducer ($\funnft$). 
\end{definition}

We will use the following two notions related to \nft{}s 
many times in our proofs.
\begin{definition}[Shortcut guarantee \shortcut]\label{def:sg}
Let an \nft $T=(Q,\Sigma,\Gamma,\delta,q_0,F)$ be given. Let 
$Q'\subseteq Q$ denote the set of all 
\emph{co-reachable} states, that 
is, states from which an accepting state can be reached on some 
input. For every $q\in Q'$, let 
\[\shortcut_q= \min\{\,|x|\mid \exists\, u\in\Gamma^\ast, f\in 
F\colon 
q\tr[.8em] xuf\,\}\] denote the length of a shortest word $x$ that 
leads from $q$ into an accepting state.
We call $\shortcut(T)=\max_{q\in Q'}\shortcut_q$ the 
\emph{shortcut guarantee} of $T$. 
\end{definition}

\begin{definition}[Output speed \outputspeed]\label{def:os}
Let an \nft $T=(Q,\Sigma,\Gamma,\delta,q_0,F)$ be given. We call 
\[\outputspeed(T)=\max\{\,|\gamma|\mid \exists\, q,q'\in Q, 
\sigma\in\Sigma\colon (q',\gamma)\in \delta(q,\sigma)\,\}\] 
the \emph{output speed} of $T$. Note that $\outputspeed(T)$ is 
well-defined since $\delta$ only maps to finite subsets of $Q\times 
\Gamma^\ast$. 
\end{definition}

\section{Bounded Trailing}
In this section, we introduce our core notion of \emph{bounded 
trailing}, which we will prove to be an undecidable 
characterization of \fnft{}s having an equivalent \ftdfa in the 
following three sections.
We begin by recapitulating the notion of \emph{bounded variation}
introduced by Choffrut~\cite{CH79}, 
which characterizes \funnft{}s having an equivalent \dft.

For any two words $v_1, v_2$, we denote by $\textrm{lcp}(v_1,v_2)$
their longest common prefix and define the distance 
\[d(v_1, v_2)=|v_1|+|v_2|-2\cdot |\textrm{lcp}(v_1,v_2)|.\] 
Observe that 
prefixing any word $u$ to both $v_1$ and $v_2$ does not change 
their distance; we have $d(uv_1,uv_2)=d(v_1,v_2)$. 

\begin{definition}[Bounded variation]\label{def:bv}
A relation $R\subseteq\Sigma^\ast\times\Gamma^\ast$ has 
\emph{bounded variation} if 
\begin{flalign*}&\forall\, k\in\N\colon\, 
\exists\, \traillength\in\N\colon \quad\forall\, 
(a_1,u_1),(a_2,u_2)\in R\colon\,\\[.5ex]
  &d(a_1,a_2)\le k\quad\Longrightarrow\quad d(u_1,u_2)\le 
  \traillength.
\end{flalign*}
\end{definition}

If $a_1$ and $a_2$ have a common prefix $a$, we can split them into 
$a_1=ab_1$ and $a_2=ab_2$, respectively. 
We have $d(b_1,b_2)\le |b_1|+|b_2|$ and, if $a$ is the longest 
common prefix of $a_1$ and $a_2$, indeed $d(b_1,b_2)= |b_1|+|b_2|$. 
Applying $d(a_1,a_2)=d(ab_1,ab_2)=d(b_1,b_2)$, we thus obtain the 
following equivalent definition of bounded variation. 

\begin{lemma}[Bounded variation]\label{lem:bv}
A relation $R\subseteq\Sigma^\ast\times\Gamma^\ast$ has 
bounded variation if and only if 
\begin{flalign*}&\forall\, k\in\N\colon\, 
\exists\, \traillength\in\N\colon \quad\forall\, 
a, b_1,b_2\in\Sigma^\ast,\  
\,u_1,u_2\in\Gamma^\ast\colon\,\\[.5ex]
  &(ab_1,u_1)\in R\ \land\ (ab_2,u_2)\in R\ \land\ |b_1|+|b_2|\le 
  k\\
&\Longrightarrow\quad d(u_1,u_2)\le \traillength.
\end{flalign*}
\end{lemma}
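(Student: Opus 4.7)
The plan is to verify the equivalence of the two formulations by a straightforward decomposition argument based on common prefixes, using only the identity $d(uv_1,uv_2)=d(v_1,v_2)$ and the elementary bound $|\mathrm{lcp}(a_1,a_2)|\ge |a|$ for any common prefix $a$ of $a_1,a_2$.

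For the forward direction (\cref{def:bv} implies the lemma's condition), I fix $k$ and let $t$ be the constant promised by \cref{def:bv} for this $k$. Given any $a, b_1, b_2\in\Sigma^\ast$ and $u_1,u_2\in\Gamma^\ast$ with $(ab_1,u_1),(ab_2,u_2)\in R$ and $|b_1|+|b_2|\le k$, I set $a_1=ab_1$ and $a_2=ab_2$. Since $a$ is a common prefix of $a_1$ and $a_2$, we have $|\mathrm{lcp}(a_1,a_2)|\ge|a|$, so
\[d(a_1,a_2)=|a_1|+|a_2|-2|\mathrm{lcp}(a_1,a_2)|\le |b_1|+|b_2|\le k,\]
and \cref{def:bv} then yields $d(u_1,u_2)\le t$, as required.

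For the backward direction, I again fix $k$ and pick $t$ from the lemma's statement. Given any $(a_1,u_1),(a_2,u_2)\in R$ with $d(a_1,a_2)\le k$, I let $a=\mathrm{lcp}(a_1,a_2)$ and write $a_1=ab_1$, $a_2=ab_2$. Because $a$ is the \emph{longest} common prefix, the remark preceding the lemma gives $|b_1|+|b_2|=d(b_1,b_2)=d(ab_1,ab_2)=d(a_1,a_2)\le k$. Applying the lemma's condition to $a,b_1,b_2,u_1,u_2$ then yields $d(u_1,u_2)\le t$.

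Neither direction involves any real obstacle; the only care needed is to keep the quantifier order intact (the constant $t$ is chosen from $k$ in both formulations in the same way, so the same $t$ works) and to cite the two elementary facts about $d$ and $\mathrm{lcp}$ recorded immediately before the lemma statement. In particular, no appeal to transducer-specific properties is required; the lemma is purely a statement about binary relations.
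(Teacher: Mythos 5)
Your proof is correct and follows essentially the same route as the paper, which justifies \cref{lem:bv} by exactly this common-prefix decomposition: the prefix-invariance $d(ab_1,ab_2)=d(b_1,b_2)$ together with $d(b_1,b_2)\le|b_1|+|b_2|$ (with equality when $a$ is the longest common prefix). Your explicit handling of both directions and of the quantifier order is a slightly more detailed write-up of the same argument, with no substantive difference.
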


We can now instantiate the characterization of bounded variation 
described in \cref{lem:bv} for the case of a relation recognized by 
a transducer. 
Specifically, we translate an input-output pair lying in 
the relation into the existence of a concrete computation by an 
\nft{}.
Appropriately splitting $u_1$ and $u_2$ into 
$uv_1w_1$ and $uv_2w_2$, respectively, with a 
(not necessarily longest)
common prefix $u$, and applying 
$d(u_1,u_2)=d(uv_1w_1,uv_2w_2)=d(v_1w_1,v_2w_2)$, this 
results in the following definition. 

\begin{definition}[Bounded variation for transducers]\label{def:bvt}
An \nft $T=(Q,\Sigma,\Gamma,q_0,F,\delta)$ has \emph{bounded 
variation} if 
\begin{flalign*}&\forall\, k\in\N\colon\, 
\exists\, \traillength\in\N\colon \quad\forall\,q_1,q_2\in Q,\ 
\,f_1,f_2\in F,&\\ 
&\hspace*{9.2em}\,a,b_1,b_2\in\Sigma^\ast,\  
\,u,
v_1,v_2,w_1,w_2\in\Gamma^\ast\colon\,\\[.5ex]
&q_0\tr{a}{uv_1}q_1\tr{b_1}{w_1}f_1\;\land\;
q_0\tr[0.9em]{a}{uv_2}q_2\tr[1.7em]{b_2}{w_2}f_2\;\land\;
|b_1|+|b_2|\leq k\\
&\Longrightarrow\quad d(v_1w_1, v_2w_2)\le \traillength.
\end{flalign*}
\end{definition}

We will now simplify \cref{def:bvt}. 
It follows from the definition of $d$ that 
\begin{align*}
d(v_1, v_2)-\big| |w_1|-|w_2| \big|&\le d(v_1w_1, v_2w_2)\\
&\le d(v_1, 
v_2)+|w_1|+|w_2|.
\end{align*}
(These bounds are tight; consider $v_1=000$, $v_2=0$, $w_1=0$ and 
either $w_2=000$ or $w_2=100$.) 
Using the bounded output speed \outputspeed of $T$ (\cref{def:os}),
we obtain 
\[||w_1|-|w_2||\le |w_1|+|w_2|\le 
\outputspeed(|b_1|+|b_2|)\le\outputspeed k.\] 
Hence, the difference between $d(v_1, v_2)$ and $d(v_1w_1, v_2w_2)$ 
can be bounded by a term depending only on $T$ and $k$ and can thus 
be hidden within the choice of $\traillength$. 
We can therefore simplify $d(v_1w_1, v_2w_2)$ to $d(v_1, v_2)$. 

Using the shortcut guarantee \shortcut of $T$ (\cref{def:sg}) for 
$b_1$ and $b_2$, we can now assume without loss of generality that 
$|b_1|+|b_2|\le 2\shortcut$ because replacing $b_1$ and $b_2$ can 
only change the respective output words $w_1$ and $w_2$, which are 
no longer referenced anywhere else. 
This finally allows us to drop the universally quantified variable 
$k\in\N$ altogether. 
We thus obtain the following equivalent definition of 
bounded variation.

\begin{lemma}\label{lem:bvtt}
An \nft $T=(Q,\Sigma,\Gamma,q_0,F,\delta)$ has 
bounded 
variation if and only if 
\begin{align*}
&\exists\, \traillength\in\N\colon\,
\quad\forall\,q_1,q_2\in Q,\ \,f_1,f_2\in F,\\ 
&\hspace*{5.5em}\,a,b_1,b_2\in\Sigma^\ast,\ \,u,v_1,v_2,w_1,w_2
\in\Gamma^\ast\colon\,\\[.5ex]
&q_0\tr{a}{uv_1}q_1\tr{b_1}{w_1}f_1\ \ \land\ \ 
q_0\tr[0.9em]{a}{uv_2}q_2\tr{b_2}{w_2}f_2\\
&\Longrightarrow\quad d(v_1, v_2)\le \traillength.
\end{align*}
\end{lemma}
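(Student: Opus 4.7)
The plan is to derive both directions from the two elementary inequalities
\[d(v_1,v_2)-\bigl||w_1|-|w_2|\bigr|\;\le\;d(v_1w_1,v_2w_2)\;\le\;d(v_1,v_2)+|w_1|+|w_2|\]
stated in the paragraph preceding the lemma, combined with the bound $|w_i|\le\outputspeed\cdot|b_i|$ coming from the output speed (\cref{def:os}) and the fact that every co-reachable state admits an accepting continuation of length at most $\shortcut$, guaranteed by the shortcut guarantee (\cref{def:sg}).

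The converse direction (simplified form implies \cref{def:bvt}) is essentially immediate. Given the trailing bound $\traillength$ from the simplified form and any $k\in\N$, any witnesses to the premise of \cref{def:bvt} with $|b_1|+|b_2|\le k$ satisfy the premise of the simplified form, so $d(v_1,v_2)\le\traillength$; the upper inequality then yields $d(v_1w_1,v_2w_2)\le\traillength+\outputspeed k$, so $\traillength+\outputspeed k$ serves as the required trailing bound for this $k$.

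For the forward direction, I instantiate \cref{def:bvt} with $k:=2\shortcut$, obtaining a constant $\traillength_0$. Now consider arbitrary witnesses to the premise of the simplified form. Since $q_1$ and $q_2$ are co-reachable (they reach $f_1$ and $f_2$, respectively), the shortcut guarantee supplies alternative suffix computations with inputs $b_1',b_2'$ of length at most $\shortcut$ and outputs $w_1',w_2'$ of length at most $\outputspeed\cdot\shortcut$ leading from $q_1,q_2$ into (possibly different) accepting states. Applying \cref{def:bvt} with the original prefix $a,u,v_1,v_2$ and these new suffixes---legal because $|b_1'|+|b_2'|\le 2\shortcut$---yields $d(v_1w_1',v_2w_2')\le\traillength_0$, and the lower inequality then gives
\[d(v_1,v_2)\;\le\;\traillength_0+\bigl||w_1'|-|w_2'|\bigr|\;\le\;\traillength_0+\outputspeed\cdot\shortcut.\]
This constant depends only on $T$, so it serves as the $\traillength$ required by the simplified form.

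The one subtlety---less an obstacle than an observation---is that in the forward direction we substitute fresh suffix witnesses $b_i',w_i',f_i'$ for the given $b_i,w_i,f_i$; this is legitimate because the conclusion $d(v_1,v_2)\le\traillength$ no longer references the suffixes. This freedom is precisely what allows the universal quantifier over $k$ to be dropped in passing from \cref{def:bvt} to the simplified formulation.
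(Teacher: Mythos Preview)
Your proposal is correct and follows essentially the same approach as the paper: both directions rest on the two-sided inequality relating $d(v_1,v_2)$ and $d(v_1w_1,v_2w_2)$, the output-speed bound $|w_i|\le\outputspeed\cdot|b_i|$, and the shortcut guarantee allowing the suffixes $b_i,w_i,f_i$ to be replaced by ones with $|b_i'|\le\shortcut$. The paper presents this argument more informally in the paragraphs preceding the lemma, whereas you have spelled out the two directions and the constants explicitly, but the content is the same.
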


Let us call a computation \emph{useful} if it can be extended to 
an accepting computation. 
Intuitively, an \nft{} has bounded variation if the outputs
of any two useful computations on the same input prefix $a$ 
only differ on suffixes of bounded length.

Our notion of bounded trailing relaxes the notion of 
bounded variation by only looking at certain pairs of 
useful computations on the same input prefix $a$, 
namely computation pairs such that 
the output $uv$ of the first computation contains 
the output $u$ of the second computation as a prefix  
and such that the second computation can be extended to 
an accepting one that \emph{trails} the first computation, 
in the sense of catching up with its output, by first
producing the missing suffix $v$.

Since we have only restricted the computation pairs for which 
we impose the condition on the output, we have indeed relaxed 
the notion of bounded variation. 
In particular, bounded variation implies bounded trailing.

We now formally define our notion of bounded trailing.
Starting from the condition in \cref{lem:bvtt}, we set 
$v_2=\lambda$, rename $v_1$ to $v$, and 
require that $w_2$ starts with the prefix $v$, which we then 
split off in the notation. 
Finally, we use $d(\lambda,v)=|v|$.

\begin{definition}[Bounded trailing]\label{def:bt}
An NFT $T=(Q,\Sigma,\Gamma,q_0,F,\delta)$ has 
\emph{bounded trailing} if 
\begin{align*}
&\exists\, \traillength\in\N\colon\,
\quad\forall\,q_1,q_2\in Q,\ \,f_1,f_2\in F,\\
&\hspace*{5.75em}a,b_1,b_2\in\Sigma^\ast,\ 
\,u,v,w_1,w_2\in\Gamma^\ast\colon\,\\[1.8ex]
&q_0\tr{a}{uv}q_1\tr{b_1}{w_1}f_1\ \ \land\ \ 
q_0\tr[0.9em]{a}{u}q_2\tr[1.7em]{b_2}{vw_2}f_2\ \ 
\Longrightarrow\ \ 
|v|\le
\traillength.
\end{align*}
\end{definition}
\Cref{ex:constr} describes a concrete transducer with bounded 
trailing.

The following three sections prove bounded trailing to be 
a necessary and sufficient condition for an \fnft to have 
an equivalent \tdfa and that bounded trailing is undecidable.
In contrast, bounded variation is decidable since it is equivalent 
to the so-called \emph{twinning property}, which is decidable in 
polynomial time~\cite{WK95}. 
We also point out that bounded trailing does not 
characterize having an equivalent \tdfa{} 
for \nft{}s that are not finite-valued. Constructing a 
counterexample is straightforward; we provide one in the following 
example.

\begin{example}\label{ex:counterexample}
We describe a not finite-valued transducer that has an equivalent 
\tdfa but nevertheless unbounded trailing. 

Consider the relation 
\[\{(0^n,0^m)\mid\,n\in\N\wedge m\in\N\wedge 0 
\le m\le 2n\},\]
which is clearly not finite-valued. 
It is easy for a \tdfa to recognize this relation by 
repeatedly advancing first the output head 
twice and then the input head once. Any input-output pair is in the 
language if and only if 
the end of the output is reached before the end of the input and 
all read symbols before the end markers are $0$. 
It is also simple to build an equivalent \nft{} with unbounded 
trailing as follows.
Whenever reading one input symbol, the transducer writes a $0$ to 
the output tape either zero, one, or two times. 

This transducer does indeed have unbounded trailing since for every 
even $\traillength\in\N$, it has the two 
computations 
\begin{align*}
\qquad q_0\tr{0^{t/2}}{\lambda 
0^t}q_\indexcaseA\tr{0^{t/2}}{\lambda}f_\indexcaseA\ \ 
\text{and}\ \ 
q_0\tr[0.9em]{0^{t/2}}{\lambda}q_\indexcaseB
\tr[1.7em]{0^{t/2}}{0^t\lambda
}f_\indexcaseB, 
\end{align*}
where $\lambda$ is the empty word and 
\[a=b_1=b_2=0^{\traillength/2}, u=w_1=w_2=\lambda,\text{ and 
}v=0^\traillength\] in \cref{def:bt}.
\end{example}

\section{Bounded Trailing is Sufficient}\label{sec:sufficiency}
In this section, we show that any \nft that 
has bounded trailing can be transformed
into an equivalent \tdfa.
Let $T$ be an \nft with a trailing bound $\traillength\in\N$. 
We construct an equivalent \tdfa $A$ that simulates 
all nondeterministic computations of transducer $T$ that 
are compatible with the output seen so far.
Automaton $A$ uses its states to maintain a subword $z$ of 
the output word with the following property.
For the currently read prefix $p$ of the input word,
there is a prefix $x$ of the output word such that for any 
accepting computation of $T$ that starts in 
the initial state $q_0$ of $T$, reads $p$, reaches a state $q$ 
of $T$, and produces an output $w$ consistent with the given 
output tape, we can write $w$ as $xy$ for a prefix $y$ of $z$. 
In other words: 
Whatever prefix $p$ of its input word automaton $A$ has read 
at any point, it has always stored a $z$ such that for some $x$ 
every accepting computation of $T$ on $p$ consistent 
with $A$'s output word has the form $q_0\tr {p}{xy}q$ 
for some prefix $y$ of $z$.

Automaton $A$ stores in its current state a representation of 
each such computation of $T$, namely the pair $(q, |y|)$. 
We denote the set of these pairs by $\pairset$. 
We show that storing $\pairset$ is feasible with a finite set 
of states by maintaining a subword $z$ of length 
at most $\buffersize=\outputspeed+\traillength$, 
where $\outputspeed$ and $\traillength$ are $T$'s output speed 
and trailing bound, respectively.
Initially, $z$ is empty and the set $\pairset$ of pairs $(q, n)$
stored in $A$'s state contains only a single pair, 
$(q_0, 0)$, where $q_0$ is $T$'s initial state. 
This reflects the fact that the only computation of $T$ on 
the empty prefix of the input tape keeps $T$ in 
its initial state and produces no output.
If $L(T)=\emptyset$, then $A$ immediately transitions 
into a rejecting sink state.

Automaton $A$ now proceeds as follows.
As long as the length of the subsequence $z$ stays 
below $\buffersize$ and the output tape has not been fully read,
the next symbol on the output tape is read and appended to $z$.
Moreover, $A$ removes from the set $\pairset$ all representations 
of computations that cannot be extended to an 
accepting computation with the extended subsequence $z$ 
of the output tape.
If the set $\pairset$ becomes empty, then automaton $A$ 
transitions into a rejecting sink state 
because there is no accepting computation of $T$
consistent with the given input and output tape.
Otherwise, $A$ determines the minimum $m$ such that 
$(q, m)\in \pairset$ for some $q$
and drops the first $m$ output symbols from the subsequence~$z$.
This corresponds to cutting off from $z$ a prefix of length $m$ 
and appending it to $x$.
Note that this is sound because none of the stored computations end 
before outputting the new $x$.
This way, $A$ maintains the invariant that there is 
some state $q$ of $T$ such that $(q, 0)\in \pairset$.

Once the length of the subsequence $z$ becomes $\buffersize$ or 
the output tape has been fully read,
automaton $A$ reads in the next symbol from the input tape and 
then simulates every single step that is nondeterministically 
possible for every single stored computation and updates 
the set $\pairset$ accordingly to a new $\pairset'$.
The mentioned invariant guarantees that $(q, 0)\in \pairset$ 
for some state $q$ of $T$.
The fact that $T$ has bounded trailing with a 
trailing bound $\traillength$ implies that $n\leq \traillength$
holds for every pair $(q, n)\in \pairset$.
Hence, performing one further nondeterministically possible step 
continuing $T$'s computation represented by $(q, n)\in\pairset$
yields a pair $(q', n')\in\pairset'$ that satisfies
$n'\le\traillength+\outputspeed$ since $\outputspeed$ is the 
longest output that $T$ can produce while reading a single symbol.
This is just within the length limit 
$\buffersize=\outputspeed+\traillength$ that we set for $z$.
As before, automaton $A$ rejects 
if the set $\pairset'$ becomes empty, 
because this means there is no accepting computation 
of $T$ consistent with the given input and output tape.
Otherwise, automaton $A$ performs on $\pairset'$ 
the normalization described in the previous paragraph
to obtain a new set $\pairset$ that maintains the invariant 
that there is some $q'$ for which $(q', 0)\in \pairset$.

Finally, if both the input and output tape have been fully read, 
automaton $A$ accepts if and only if
there is some $q\in F$ with $(q, |z|)\in \pairset$, that is, 
an accepting state $q$ of $T$ that some computation of $T$ arrives 
at after producing an output that matches $xz$ until the very end, 
meaning that the output is equal to the content of the output tape. 

\begin{example}\label{ex:constr}
We now provide a concrete example of the general construction 
described above; that is, we use a given \nft with bounded trailing 
to construct an equivalent \tdfa. 

We consider the \nft{} $T$ with the set of states $Q=\{q_0, q_1, 
q_2, q_3\}$,
initial state $q_0\in Q$, the set of accepting states 
$F=\{q_0\}$,
and the transition function $\delta$ given in \cref{fig:delta}.
Let us define the language
\[L_0=\{(aa, ababa), (aa, ababab), (ab, ababaa)\}.\]
One can check that the language computed by $T$ is the Kleene 
closure of $L_0$ (i.e., $L(T)=L_0^*$) and that the trailing of $T$ 
is bounded by $\traillength=1$.

The computation of automaton $A$ on the input-output pair $(aa, 
ababab)$ is summarized in \cref{fig:construction}. We now describe 
this computation in detail. 
Because $L(T)\neq\emptyset$, the initial state of automaton $A$ is 
storing $z=\lambda$, where $\lambda$ denotes the empty word, and 
$\pairset=\{(q_0, 0)\}$. 
The output speed of $T$---that is, the length of a longest output 
that $T$ can produce while reading one input symbol---%
is $\outputspeed=4$.
Hence, the maximum length of the subsequence $z$ maintained in 
$A$'s state is $\buffersize=\outputspeed+\traillength=5$.

Because the output tape consists of six symbols, the first 
$\buffersize=5$ of them are read and appended to  $z$.
The nonempty output words in $L(T)$ all start with the same five 
symbols,
and the only pair in $\pairset$, namely $(q_0, 0)$, can be extended 
to an accepting computation consistent with next five output 
symbols.
These symbols are therefore successively output and appended to 
$z$, while $(q_0, 0)$ is being kept in $\pairset$. 
After this, $A$'s state is storing 
$z=ababa$ and $\pairset=\{(q_0, 0)\}$.

Now that the length of $z$ is $5=\buffersize$, the first symbol $a$ 
from the input tape is read.
All nondeterministic steps from
\[\delta(q_0, a)=\{(q_1, aba), (q_2, 
abab), (q_3, abab)\}\]
are consistent with $z$, which leads to 
$\pairset'=\{(q_1, 3), 
(q_2, 4), (q_3, 4)\}$. 
The minimum $m$ for which $(q, m)\in \pairset'$ for some $q$ is 
$m=3$.
After performing the normalization of $\pairset'$ with this $m$, 
the state of automaton $A$ is storing $z=ba$ and 
$\pairset=\{(q_1, 
0), (q_2, 1), (q_3, 1)\}$.

Since the length of $z$ is now $2<\buffersize$, automaton $A$ reads 
the next output symbol $b$
and appends it to $z$, which thus becomes $z=bab$.
Then $A$ removes the pair $(q_3, 1)$ from the set $\pairset$ 
because this pair can no longer be extended to an accepting 
computation consistent
with the extended subsequence $z=bab$ since the only possible 
transition from $q_3$ produces the output $aa$, which is 
inconsistent
with the suffix $ab$ of $z=bab$. 
The remaining two pairs $(q_1, 0)$ and $(q_2, 1)$ are still 
consistent with the extended $z=bab$.
As $(q_1, 0)$ stays in $\pairset$, no normalization need be 
performed.%

Now that the end of the output tape has been reached, $A$ reads in 
the next symbol $a$
from the input tape despite $|z|=3<\buffersize$. 
Performing one step that reads $a$ on every pair in $\pairset$ 
yields $\pairset'=\{(q_0, 3)\}$.
Note that $(q_0, 2)\not\in \pairset'$ because each transition from 
$q_0$ starts with a $b$, the last symbol of $z$.
After one more normalization, $A$'s state is storing $z=\lambda$ 
and $\pairset=\{(q_0, 0)\}$.

Finally, $A$ has reached the end of both the input and output tape, 
thus it checks whether $(q, |z|)\in \pairset$ for some $q\in F$.
Because $(q_0, 0)\in \pairset$, $q_0\in F$, and $|z|=0$, automaton 
$A$ accepts the 
input-output pair $(aa, ababab)\in L(T)$.

\begin{figure}
\begin{subfigure}{\linewidth}
\centering
\begin{tabular}{ccc}
\toprule
&$a$&$b$\\
\cmidrule(lr){2-2}\cmidrule(lr){3-3}
$q_0$&$\{(q_1, aba), (q_2, abab), (q_3, abab)\}$&$\emptyset$\\
$q_1$&$\{(q_0, ba), (q_0, bab)\}$&$\emptyset$\\
$q_2$&$\{(q_0, ab)\}$&$\emptyset$\\
$q_3$&$\emptyset$&$\{(q_0,aa)\}$\\
\bottomrule
\end{tabular}
\caption{The transition function $\delta$ of the given \nft{} 
$T$.\hfill}\label{fig:delta}
\end{subfigure}

\renewcommand\arraystretch{.7} \setlength\minrowclearance{1.4pt}
\setlength\tabcolsep{4pt} 

\newcommand{\head}[1]{\cellcolor{black}\textcolor{white}{$#1$}}
\newcommand{\buf}[1]{\cellcolor{black!25}$#1$}
\newcommand{\myvrule}{{\vrule depth 16pt height 20pt width 0pt}}

\newcommand{\cellwidth}{.5em}
\vspace*{2em}
\begin{subfigure}{\linewidth}
\centering
\begin{tabular}{ll}
\toprule
\begin{tabular}{l}
\begin{tabular}{|*3{C{\cellwidth}|}}
\hline
\head{a}&$a$&$\blank$\\
\hline
\end{tabular}\\[.4ex]
\begin{tabular}{|*7{C{\cellwidth}|}}
\hline
\head{a}&$b$&$a$&$b$&$a$&$b$&$\blank$\\
\hline
\end{tabular}
\end{tabular}
&
\footnotesize $\begin{aligned}
P&=\{(q_0, 0)\}\\[-.3ex]
x&=\lambda\\[-.3ex]
z&=\lambda
\end{aligned}$
\\
\midrule
\begin{tabular}{l}
\begin{tabular}{|*3{C{\cellwidth}|}}
\hline
\head{a}&$a$&$\blank$\\
\hline
\end{tabular}\\[.4ex]
\begin{tabular}{|*7{C{\cellwidth}|}}
\hline
\buf a&\buf b&\buf a&\buf b&\buf a&\head b&$\blank$\\
\hline
\end{tabular}
\end{tabular}
&
\footnotesize $\begin{aligned}
P&=\{(q_0, 0)\}\\[-.3ex]
x&=\lambda\\[-.3ex]
z&=ababa
\end{aligned}$
\\
\midrule
\begin{tabular}{l}
\begin{tabular}{|*3{C{\cellwidth}|}}
\hline
$a$&\head a&$\blank$\\
\hline
\end{tabular}\\[.4ex]
\begin{tabular}{|*7{C{\cellwidth}|}}
\hline
$a$&$b$&$a$&\buf b&\buf a&\head b&$\blank$\\
\hline
\end{tabular}
\end{tabular}
&
\footnotesize $\begin{aligned}
P&=\{(q_1, 0), \ \;(q_2, 1),(q_3, 1)\}\\[-.3ex]
x&=aba\\[-.3ex]
z&=ba
\end{aligned}$
\\
\midrule
\begin{tabular}{l}
\begin{tabular}{|*3{C{\cellwidth}|}}
\hline
$a$&\head a&$\blank$\\
\hline
\end{tabular}\\[.4ex]
\begin{tabular}{|*7{C{\cellwidth}|}}
\hline
$a$&$b$&$a$&\buf b&\buf a&\buf b&\head\blank\\
\hline
\end{tabular}
\end{tabular}
&
\footnotesize $\begin{aligned}
P&=\{(q_1, 0),(q_2, 1)\}\\[-.3ex]
x&=aba\\[-.3ex]
z&=bab
\end{aligned}$
\\
\midrule
\begin{tabular}{l}
\begin{tabular}{|*3{C{\cellwidth}|}}
\hline
$a$&$a$&\head\blank\\
\hline
\end{tabular}\\[.4ex]
\begin{tabular}{|*7{C{\cellwidth}|}}
\hline
$a$&$b$&$a$&$b$&$a$&$b$&\head\blank\\
\hline
\end{tabular}
\end{tabular}
&
\footnotesize $\begin{aligned}
P&=\{(q_0, 0)\}\\[-.3ex]
x&=ababab\\[-.3ex]
z&=\lambda
\end{aligned}$\\
\bottomrule
\end{tabular}
\caption{The computation of \tdfa{} $A$ on input $(aa, 
ababab)$. The positions of the heads are marked in black, the 
subsequence $z$ of the output tape is shaded gray, and $x$ consists 
of the white cells before $z$.}
\label{fig:construction}
\end{subfigure}
\caption{Illustrations pertaining to \cref{ex:constr}.}
\end{figure}
\end{example}

We conclude this section by formally stating its main result. 
\begin{theorem}\label{thm:bt-sufficient}
Any \nft{} with bounded trailing has an equivalent \tdfa{}.
\end{theorem}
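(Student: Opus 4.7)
The plan is to give a rigorous construction of a \tdfa $A$ equivalent to $T$ along the lines already sketched in the text, then prove correctness by a direct simulation argument. Let $T = (Q, \Sigma, \Gamma, \delta, q_0, F)$ have trailing bound $\traillength$ and output speed $\outputspeed$, and set $\buffersize = \outputspeed + \traillength$. I take the state space of $A$ to consist of a rejecting sink together with all pairs $(\pairset, z)$ with $z \in \Gamma^{\le \buffersize}$ and $\pairset \subseteq Q \times \{0, 1, \ldots, \buffersize\}$; this is finite. The initial state is $(\{(q_0, 0)\}, \lambda)$ unless $L(T) = \emptyset$, in which case it is the sink. The transition function is defined in two cases exactly as in the text: if $|z| < \buffersize$ and the output head does not read $\blank$, advance the output head, append the new symbol to $z$, and filter $\pairset$ to keep only those $(q, n)$ whose computation can still be extended to produce $z[n..|z|)$ and then reach an accepting state; otherwise advance the input head, reading $\sigma$, and replace $\pairset$ by all pairs $(q', n + |\gamma|)$ arising from transitions $(q', \gamma) \in \delta(q, \sigma)$ with $(q, n) \in \pairset$ and $\gamma$ matching the corresponding slice of $z$. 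In either case, if the resulting $\pairset$ becomes empty, transition to the sink; otherwise normalize by subtracting $m = \min\{n : (q, n) \in \pairset\}$ from every second component and dropping the first $m$ symbols of $z$. A state $(\pairset, z)$ is accepting iff both heads read $\blank$ and $(f, |z|) \in \pairset$ for some $f \in F$.

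The correctness proof proceeds by induction on the number of $A$-steps, establishing the following invariant: whenever $A$ reaches the non-sink state $(\pairset, z)$ after consuming the input prefix $a$ and the output prefix $xz$ (where $x$ is what has already been committed before the buffered $z$), then $(q, n) \in \pairset$ iff there exists a computation $q_0 \tr{a}{xz[0..n)} q$ and, writing $\alpha$ and $\beta$ for the remaining input and output tape contents, a continuation $q \tr{\alpha}{z[n..|z|)\beta} f$ with $f \in F$; moreover, $\min\{n : (q, n) \in \pairset\} = 0$. This invariant immediately yields $L(A) = L(T)$: once both heads reach $\blank$ we have $\alpha = \beta = \lambda$, and the acceptance condition $(f, |z|) \in \pairset$ is then exactly equivalent to the existence of an accepting $T$-computation on $a$ producing output $xz$.

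The main obstacle is showing that $A$'s finite state space actually suffices, i.e., that the construction never attempts to store a pair $(q, n)$ with $n > \buffersize$. By the invariant each normalized $\pairset$ contains some $(q_*, 0)$ for which $q_0 \tr{a}{x} q_*$ admits an accepting continuation producing $z\beta$, while any other $(q, n) \in \pairset$ yields $q_0 \tr{a}{xz[0..n)} q$ with an accepting continuation producing $z[n..|z|)\beta$. Instantiating \cref{def:bt} with $u = x$, $v = z[0..n)$, $q_1 = q$, $q_2 = q_*$, and these two accepting continuations playing the roles of $b_1, w_1$ and $b_2, w_2$ respectively, bounded trailing forces $|v| = n \le \traillength$. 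Consequently, immediately before any input-advance step every second component of $\pairset$ lies in $[0, \traillength]$, and after a single simulated transition it lies in $[0, \traillength + \outputspeed] = [0, \buffersize]$, so the buffer $z$ has room to accommodate it. All remaining verifications---that filtering and normalization preserve the invariant, that transitioning to the sink coincides with the non-existence of any compatible accepting $T$-computation, and that the object so defined is a valid \tdfa in the sense of \cref{def:tdfa}---are routine bookkeeping.
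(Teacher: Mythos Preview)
Your construction is essentially identical to the paper's---same buffer size $\buffersize=\outputspeed+\traillength$, same state representation $(\pairset,z)$, same two transition cases, same normalization and acceptance condition---and you supply more explicit correctness reasoning than the paper itself, which only describes the construction informally before stating the theorem.

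There is one genuine slip in your invariant. The state $(\pairset,z)$ that $A$ reaches depends only on the prefixes $a$ and $xz$ already consumed, yet your characterization of $\pairset$ refers to the specific unread suffixes $\alpha$ and $\beta$. Two input--output pairs that agree on the consumed prefix but differ in their suffixes would drive $A$ into the \emph{same} state while forcing \emph{different} descriptions of $\pairset$, so the biconditional cannot hold as written. What your filtering step actually maintains is the existential version: $(q,n)\in\pairset$ iff there is a computation $q_0\tr{a}{xz[0..n)}q$ together with \emph{some} accepting continuation $q\tr{b}{z[n..|z|)\,w}f\in F$ for some $b,w$. Fortunately this weaker invariant is exactly what both your bounded-trailing argument (the pair $(q_*,0)$ still yields a continuation whose output starts with $v=z[0..n)$) and your final acceptance check require, so the proof goes through verbatim after this correction.
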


\section{Bounded Trailing is Necessary}\label{sec:bt-necessary}

In this section, we prove the reverse of \cref{thm:bt-sufficient} 
for the case of finite-valuedness. 

\begin{theorem}\label{thm:bt-necessary}
Any \fnft{} with an equivalent \tdfa{} has bounded trailing.
\end{theorem}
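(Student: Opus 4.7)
I will prove the contrapositive: if the \fnft{} $T$ is equivalent to a \tdfa{} $A$, then $T$ has bounded trailing. Let $N=|Q_A|$ and let $k$ be the valuedness of $T$. Suppose for contradiction that $T$ has unbounded trailing; then for any threshold $\traillength$ there are witnesses $q_1,q_2,f_1,f_2,a,b_1,b_2,u,v,w_1,w_2$ as in \cref{def:bt} with $|v|>\traillength$. By the shortcut guarantee, I may replace $b_1$ by a shortest accepting continuation from $q_1$ (which changes only $w_1$ while leaving $q_1,q_2,a,u,v$ intact), hence assume $|b_1|\le\shortcut(T)$. Both $(ab_1,uvw_1)$ and $(ab_2,uvw_2)$ lie in $L(A)$, and $A$ has unique accepting runs $R_1,R_2$ on them.

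\textbf{Core idea.} I will find within $R_1$ or $R_2$ a \emph{pumpable segment}: a sub-trajectory whose endpoints share the $A$-state and input-head position while differing in output-head position. Because the input head of a \tdfa{} advances monotonically, such a segment necessarily keeps the input stationary while strictly advancing the output, and repeating it produces infinitely many distinct outputs on a fixed input---contradicting the $k$-valuedness of $L(A)=L(T)$. Applied to any accepting run, this observation already forces every input stripe to contain at most $N-1$ output advances. To locate a pumpable segment along the $v$-phase, I record, for each output position $j\in[|u|,|u|+|v|]$, the pair $(\rho(j),\sigma(j))$ consisting of the input-head position and state at the first moment the output head reaches $j$, and I look for two $j$'s with matching pair.

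\textbf{Main obstacle and resolution.} Plain pigeonhole against the $(|ab_?|+1)\cdot N$ possible values of $(\rho,\sigma)$ forces a collision only when $|v|$ outpaces roughly $N\cdot|a|$, whereas $|a|$ itself grows linearly with $|v|$ through the output-speed bound $|uv|\le\outputspeed(T)\cdot|a|$; the count therefore falls short whenever $\outputspeed(T)\le N$. The remedy is to use the two runs jointly. Since $A$ is deterministic, $R_1$ and $R_2$ produce identical configurations as long as both heads remain in the shared prefix region---input head $<|a|$ and output head $<|uv|$---so I split the analysis at the first moment one head exits this region. Past such a divergence, the input head of the exiting run is confined to a window of length at most $\shortcut(T)+1$ (using $|b_1|\le\shortcut(T)$ for $R_1$; for $R_2$, by applying the shortcut guarantee to the $T$-state reached at the divergence and re-choosing the continuation), bringing the post-divergence portion of the $v$-phase within reach of the pigeonhole. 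The technical heart will be the case analysis---shared portion versus each run's direction of exit---ensuring that in every scenario at least one of $R_1,R_2$ exhibits a pumpable segment once $|v|$ crosses a threshold polynomial in $N$, $\shortcut(T)$, and $\outputspeed(T)$; this simultaneously pins down the explicit value of $\traillength$ and delivers the contradiction.
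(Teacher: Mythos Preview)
Your easy case (the input head of $A$ reaches $|a|$ first) is essentially the paper's Case~\caseA, modulo one fixable slip: if the output head is already within a few symbols of $|uv|$ at the moment of divergence, the ``post-divergence portion of the $v$-phase'' may be too short for pigeonhole. The paper handles this by placing the output threshold not at $|uv|$ but at $|uv|-\homestretch$ with $\homestretch=(\shortcut+1)(|Q_A|+1)$, guaranteeing enough remaining $v$-symbols to force an output-only loop.

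The genuine gap is your other case, where $A$'s output head reaches $|uv|$ while its input head is still at some position $p<|a|$. Here the entire $v$-phase lies in the \emph{pre}-divergence region, so there is no ``post-divergence portion of the $v$-phase'' at all. Your remedy---``applying the shortcut guarantee to the $T$-state reached at the divergence and re-choosing the continuation''---does not work: the divergence is a moment in $A$'s run, with no canonical $T$-state attached to it. If you mean to shortcut one of $T$'s two computations after it has read $a[0{..}p]$, the resulting accepted pair no longer has $uv$ as an output prefix (on the $u$-producing run the output after $a[0{..}p]$ is only a proper prefix of $u$; on the $uv$-producing run you get a prefix of $uv$ followed by an unrelated $w'$), so $A$'s run on the new pair does \emph{not} coincide with $R_1,R_2$ through the $v$-phase, and you lose the structure you need. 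More fundamentally, in this case there may simply be no output-only loop along the $v$-phase: if $A$ alternates input and output moves while traversing $v$, then $j\mapsto(\rho(j),\sigma(j))$ is injective no matter how large $|v|$ is, and your pumping scheme cannot fire.

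This is exactly why the paper's Case~\caseB\ is far more elaborate. Rather than seeking a single output pump in $A$, it decomposes $v$ into $k$ blocks $v_k,\dots,v_1$ of recursively chosen lengths and proves by induction on $i$ that inside $T$'s run over block $v_i$ one can locate a nonempty loop that is simultaneously an \emph{input-only} loop for each of the $i$ accepting $A$-runs already constructed. Excising that common loop shortens the input and yields an $(i{+}1)$-st accepted output of strictly smaller length; after $k$ rounds there are $k{+}1$ distinct outputs on a single input, contradicting $k$-valuedness. The crucial difference is that the contradiction is with $k$-valuedness via $k{+}1$ outputs, not with finite-valuedness via infinitely many---your pumpable-segment strategy aims for the latter and cannot succeed in this case.
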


\begin{proof}
Let an \fnft $T$ and a \tdfa $A$ with $L(T)=L(A)$ be given. 
(We may assume that $T$ and $A$ use a common 
input alphabet $\Sigma$ and a common output alphabet $\Gamma$.) 
Let $k$ be an arbitrary integer such that $T$ is $k$-valued. 
We assume that $T$ has unbounded trailing and derive from this 
a contradiction to the $k$-valuedness of $T$, 
thus proving the theorem. 
We may assume without loss of generality that $A$ moves 
exactly one head in each step because any step moving 
both heads at once can be simulated by two steps moving 
only one head at a time using one additional intermediate state. 

Denote the state sets of transducer $T$ and automaton $A$ 
by $Q_T$ and $Q_A$, respectively. 
Let $\shortcut$ and $\outputspeed$ be the transducer's 
shortcut guarantee and output speed, respectively. 
Finally, we define a \emph{homestretch length} 
$\homestretch=(\shortcut+1)\cdot(|Q_A|+1)$ and 
a \emph{trail length} minimum $\traillength = 
\homestretch+\traillength_1+\dots+\traillength_k$, where 
$\traillength_i$ is recursively defined by 
\[\traillength_i=2\outputspeed+\outputspeed(1+
i(\traillength_0+\traillength_1+\dots+\traillength_{i-1}))\cdot
(1+|Q_T|\cdot|Q_A|^i)\]
with the base case
$\traillength_0=\homestretch+(\shortcut+1)\outputspeed$. 
The reason for choosing exactly these values for \homestretch 
and \traillength will become clear during the proof. 
For now, we note that they depend only on the given transducer $T$ 
and the automaton $A$, hence they are well-defined and 
fixed within this proof.  

Since $T$ has unbounded trailing, it has two accepting computations 
\begin{align*}
q_0\tr{a}{uv}{}
q_\indexcaseA\tr{b_\indexcaseA}{w_\indexcaseA}f_\indexcaseA
\quad\text{ and }\quad
q_0\tr[1.2em]{a}{u}{}
q_\indexcaseB\tr{b_\indexcaseB}{vw_\indexcaseB}f_\indexcaseB
\end{align*}
with $|v|>\traillength$, where $q_0$ is the initial state and 
$f_\indexcaseA$ and $f_\indexcaseB$ are accepting states of $T$. 
We decompose the \emph{trail} $v$ of length at least 
$\traillength$ as $\pre vv_kv_{k-1}\dots v_2v_1\suf v$ with 
$|v_k|=\traillength_k,\dots,|v_1|=\traillength_1$ and 
$|\suf v|=\homestretch$. We call $\suf v$ the \emph{homestretch}. 
We consider the input-output prefix $(a,uv)$ common to both 
computations of automaton $A$. 
The two-tape automaton model ensures that the two heads of $A$ 
will eventually reach the end of $a$ and $v_1$, respectively. 
Depending on which one does so first, we distinguish two cases.

\def\tapeheight{0.45}
\def\separator{0.48}

\newcommand{\bottomtape}[3]{\draw[semithick] (#1,0) rectangle
(#2,\tapeheight) node[midway]
{#3\vphantom{$\strut_{\strut}^{\strut}$}};}
\newcommand{\toptape}[3]{\draw[semithick]
(#1,\tapeheight+\separator) rectangle (#2,2*\tapeheight+\separator)
node[midway] {#3\vphantom{$\strut_{\strut}^{\strut}$}};}

\newcommand{\bottomhead}[2][white]{\draw[thick,fill=#1]
(#2,\tapeheight+.02)--+(.15,.3)--+(-.15,.3)--cycle;}
\newcommand{\tophead}[2][white]{\draw[thick,fill=#1]
(#2,\tapeheight+\separator-.02)--+(.15,-.3)--+(-.15,-.3)--cycle;}

\newcommand{\halftophead}[2][white]{\draw[thick,fill=#1]
(#2,\tapeheight+\separator-.02)--+(0,-.3)--+(-.15,-.3)--cycle;}

\newcommand{\halftopheadright}[2][white]{\draw[thick,fill=#1]
(#2,\tapeheight+\separator-.02)--+(0,-.3)--+(.15,-.3)--cycle;}

\newcommand{\halftopheadalt}[2][white]{\draw[ultra thin,fill=#1]
(#2,2*\tapeheight+\separator+.02)--+(-.15,.3)--+(0,.3)--cycle;}

\newcommand{\halftopheadaltright}[2][white]{\draw[ultra 
thin,fill=#1]
(#2,2*\tapeheight+\separator+.02)--+(.15,.3)--+(0,.3)--cycle;}

\newcommand{\patterntophead}[2][north west lines]{\draw[thick,
pattern=#1]
(#2,\tapeheight+\separator-.02)--+(.15,-.3)--+(-.15,-.3)--cycle;}

\newcommand{\bottomheadalt}[2][white]{\draw[thick,fill=#1]
(#2,-.02)--+(.15,-.3)--+(-.15,-.3)--cycle;}
\newcommand{\topheadalt}[2][white]{\draw[thick,fill=#1]
(#2,2*\tapeheight+\separator+.02)--+(.15,.3)--+(-.15,.3)--cycle;}
\newcommand{\topheadhigh}[2][white]{\draw[thick,fill=#1]
(#2,2*\tapeheight+\separator+.02+.3+.02)--+(.15,.3)--+(-.15,.3)--cycle;}

\newcommand{\bottombrace}[4][0]{
\draw [semithick, decorate, 
decoration={calligraphic brace, amplitude=6pt, mirror}]
(#2+.02,0-.1+#1) -- (#3-.02,0-.1+#1) node[draw=none, fill=none, 
midway, below, yshift=-1ex] {#4};
}

\newcommand{\topbrace}[4][0]{
\draw [semithick, decorate, decoration={calligraphic brace, 
amplitude=6pt, mirror}]
(#3-.02,2*\tapeheight+\separator+.08+#1) -- 
(#2+.02,2*\tapeheight+\separator+.08+#1) node[draw=none, fill=none,
midway, above, yshift=1ex] {#4};
}

\begin{figure*}[ht]
\begin{subfigure}{\textwidth}
\begin{tikzpicture}[scale=1.25]
\toptape{0}{9}{$a$}
\toptape{9}{10.7}{$b_\indexcaseA$}

\bottomtape{0}{2}{$u$}
\bottomtape{2}{11}{$v$}
\bottomtape{11}{11.8}{$w_\indexcaseA$}

\tophead{9}
\halftopheadright[black]{9}
\bottomhead{1.3}
\bottomhead[black]{11}
\end{tikzpicture}
\caption{Case \caseA: Automaton $A$'s head positions when 
its input head has just reached the end of $a$ and 
transducer $T$'s head positions after its computation 
$q_0\tr a{uv}q_\indexcaseA$.}
\label{fig:case1}
\end{subfigure}
\begin{subfigure}{\textwidth}
\begin{tikzpicture}[scale=1.25]
\toptape{0}{9}{$a$}
\toptape{9}{13.5}{$b_\indexcaseB$}

\bottomtape{0}{2}{$u$}
\bottomtape{2}{2.6}{$\pre v$}
\bottomtape{2.6}{5.4}{$v_k$}
\node at (6.1,0.2) {$\cdots$};
\bottomtape{6.8}{8.6}{$v_2$}
\bottomtape{8.6}{10}{$v_1$}
\bottomtape{10}{11}{$\suf v$}
\bottomtape{11}{12.7}{$w_\indexcaseB$}

\tophead{5.5}
\tophead[black]{9}
\bottomhead[black]{2}
\bottomhead{10}

\bottombrace{2.6}{5.4}{$\traillength_k$}
\bottombrace{6.8}{8.6}{$\traillength_2$}
\bottombrace{8.6}{10}{$\traillength_1$}
\bottombrace{10}{11}{$\homestretch$}
\end{tikzpicture}
\caption{Case \caseB: Automaton $A$'s head positions when its 
output head has just reached the end of $v_1$ and transducer $T$'s
head positions after computing $q_0\tr{a}{u}q_\indexcaseB$.}
\label{fig:case2}
\end{subfigure}
\caption{Two different input-output pairs in the relation 
$L(A)=L(T)$. 
The black and white triangles represent the heads of 
transducer $T$ and automaton $A$, respectively. 
The second subfigure shows a decomposition 
$\pre vv_kv_{k-1}\dots v_2v_1\suf v$ of $v$.}
\end{figure*}
\medskip
\textbf{Case \boldcaseA: Input head is first.}
In this case, we consider the input-output pair 
$(ab_\indexcaseA,uvw_\indexcaseA)$; see \cref{fig:case1}. 

We begin by showing why we may assume without loss of generality 
that $|b_\indexcaseA|\le \shortcut$. 
For this, we consider transducer $T$'s configuration after 
the computation $q_0\tr{a}{uv}{} q_\indexcaseA$; 
the corresponding head positions are indicated 
by the black triangles in \cref{fig:case1}. 
The shortcut guarantee $\shortcut$ ensures the existence of 
a word pair $(b,w)\in\Sigma^\ast\times\Gamma^\ast$ and 
an accepting state $f\in F$ such that 
$q_\indexcaseA\tr[1.0em] {\raisebox{-.2ex}{$b$}}wf$ and 
$|b|\le\shortcut$; we can therefore substitute $b$, $w$, and $f$
for $b_\indexcaseA$, $w_\indexcaseA$, and $f_\indexcaseA$ 
if necessary and thus assume $b_\indexcaseA$. 

Now, we consider automaton $A$'s computation on the same 
input-output pair $(ab_\indexcaseA,uvw_\indexcaseA)$. 
Let $x$ denote the output word's suffix that has not yet been read 
by $A$ when its input head has just reached the end of $a$; 
see the white triangles in \cref{fig:case1} for 
$A$'s head positions. 
The input head has only $b_\indexcaseA$ left to read 
but the output head all of $x$.  
We have already established that 
$|b_\indexcaseA|\le \shortcut$ using the shortcut guarantee, 
and we know that 
$|x|\ge |\suf v|=\homestretch$ since $x$ contains 
the homestretch $\suf v$. 
In each step, $A$ advances exactly one head by exactly one symbol, 
and a head does not move anymore once it has reached the end 
of the word written on its tape. 
Thus at most $\shortcut$ movements remain for the input head 
but at least $\homestretch$ for the output head. 
We call a step in which the input head moves an \emph{input step} 
and a step in which the output head moves an \emph{output step}. 
The input steps split the remaining computation into 
at most $\shortcut+1$ sequences of consecutive output steps. 
Since there are at least $\homestretch$ output steps, 
there is at least one sequence of $\homestretch/(\shortcut+1)$ 
uninterrupted output steps.
Because of $\homestretch/(\shortcut+1)>|Q_A|$, 
this sequence contains at least two different output steps 
leading $A$ into the same state. 
Choosing any two such steps, we can repeat a nonempty part of 
the output word arbitrarily often, namely, the part 
that starts at the position of the output head 
immediately after the first step and 
ends with the symbol at the position of the output head 
just before the second step. 
This results in arbitrarily many accepting computations, 
with the word on the input tape unmodified. 
Hence $A$ associates infinitely many different output words 
with the same input word $ab_\indexcaseA$, 
contradicting the $k$-valuedness of $L(A)=L(T)$. 

\medskip
\textbf{Case \boldcaseB: Output head is first.} In this case, 
automaton~$A$'s output head reaches the end of $v_1$ 
before its input head has 
finished reading $a$. This remains true for $A$'s computation on 
the input-output pair $(ab_\indexcaseB,uvw_\indexcaseB)$; see 
\cref{fig:case2}. 

In what follows, we establish an upper bound on the length of 
the remaining output $\suf vw_\indexcaseB$. 
The homestretch length $|\suf v|=\homestretch$ is already fixed. 
The length of $w_\indexcaseB$ can be bounded by combining 
the shortcut guarantee $\shortcut$ and 
the output speed $\outputspeed$ as follows. 
Consider the transducer's computation 
\[q_\indexcaseB\tr{b_\indexcaseB}{vw_\indexcaseB}f_\indexcaseB.\]
 
Since the output head can write at most $\outputspeed$ symbols per 
step, we can cut off from $b_\indexcaseB$ a prefix $\pre 
b_\indexcaseB$ such that 
\[q_\indexcaseB\tr{\raisebox{-.2ex}{$\pre 
b_\indexcaseB$}}{\raisebox{-.4ex}{$v\pre 
w_\indexcaseB$}}q'\]
for some prefix $\pre w_\indexcaseB$ of $w_\indexcaseB$ with length 
$|\pre w_\indexcaseB|<\outputspeed$ and some state $q'$. 
Denote by $\suf b_\indexcaseB$ and $\suf w_\indexcaseB$ 
the remaining suffixes such that 
$b_\indexcaseB=\pre b_\indexcaseB\suf b_\indexcaseB$ and 
$w_\indexcaseB=\pre w_\indexcaseB\suf w_\indexcaseB$. 
The state $q'$ is co-reachable as evidenced by the 
accepting computation 
\[q_0\tr[0.9em] auq_\indexcaseB\tr{\raisebox{-.2ex}{$\pre 
b_\indexcaseB$}}{v\pre 
w_\indexcaseB\vphantom{w'}}q'\tr{\raisebox{-.2ex}{$\suf 
b_\indexcaseB$}}{\suf w_\indexcaseB\vphantom{w'}}f_\indexcaseB.\]
Using the definition of the shortcut guarantee $g$, 
we can therefore assume $|\suf b_\indexcaseB|\le\shortcut$ 
by substituting a sufficiently short $b'$ and some 
appropriate $w'$ and $f'$ for $\suf b_\indexcaseB$, 
$\suf w_\indexcaseB$, and $f_\indexcaseB$ if necessary. 
From this, we then obtain 
$|\suf w_\indexcaseB|\le \shortcut\cdot\outputspeed$ 
since the output speed \outputspeed tells us how many symbols 
the transducer can output at most when reading one input symbol. 
We can thus assume that 
\[|w_\indexcaseB|=|\pre w_\indexcaseB\suf 
w_\indexcaseB|<\outputspeed+\shortcut\cdot\outputspeed=(\shortcut+1)\outputspeed\]
 without loss of generality. This finally yields the desired upper 
bound $|\suf vw_\indexcaseB|< 
\homestretch+(\shortcut+1)\outputspeed$. 
 
Recall that $\pre vv_kv_{k-1}\dots v_2v_1\suf v$ is 
the unique decomposition of $v$ with
$|v_k|=\traillength_k,\dots,|v_1|=\traillength_1$ and 
$\suf v=\homestretch$.
Since $\traillength_0 = \homestretch + (\shortcut+1)\outputspeed$, 
we immediately obtain \[|v_iv_{i-1}\dots v_1\suf vw_\indexcaseB|\le 
\traillength_0+\traillength_1+\dots+\traillength_i.\] 
Let $\suf a$ be $a$'s suffix that is still unread at the moment 
when the output head of the automaton reaches the start of $\suf v$.
Denote automaton $A$'s initial state by $\hat q_0$ and choose any 
decomposition of $a$ into $\pre aa_ka_{k-1}\dots a_2a_1\suf a$ 
such that $A$'s computation $\hat 
q_0\tr[4em]{ab_\indexcaseB}{uvw_\indexcaseB}f_\indexcaseB$ splits 
into 
\[\hat q_0\tr{\pre a}{u\pre v}\hat q_{k+1}\tr{a_k}{v_k}\hat 
q_k\tr[2em]{a_{k-1}}{v_{k-1}}\dots\tr{a_2}{v_2}\hat 
q_2\tr{a_1}{v_1}\hat q_1\tr{\suf ab_\indexcaseB}{\suf 
vw_\indexcaseB}f_\indexcaseB.\]

We will now prove the following claim by induction over 
$i\in\{0,1,\dots,k\}$:\hfill

\begin{claim}\label{clm:induction}
For every $i\in\{0,1,\dots,k\}$, there is an input word 
$b_{i+1}\in\Sigma^\ast$ and, for every $j\in\{1,2,\dots,i\}$, 
an output word $\tilde v_j\in\Gamma^\ast$ with $|\tilde 
v_j|<|v_j|$, such that transducer $T$ has the accepting computation
\[q_0\tr[1.2em]{a}{u}{}q_\indexcaseB\tr[16em]{b_{i+1}}{\pre 
vv_kv_{k-1}\dots v_{i+1}\tilde v_i\tilde v_{i-1}\dots\tilde 
v_2\tilde v_1\suf vw_\indexcaseB}f_\indexcaseB\]
and such that automaton $A$ has $i+1$ computations 
that all read the same input word $ab_{i+1}$ overall, 
all begin with 
\[\hat q_0\tr[10em]{\pre aa_ka_{k-1}\dots a_{i+1}}{u\pre 
vv_kv_{k-1}\dots v_{i+1}}{}\hat q_{i+1},\] 
and from $\hat q_{i+1}$ lead to accepting states while reading 
the remaining input $a_i\dots a_2a_1\suf ab_i$ and 
the $i+1$ distinct outputs 
\begin{align*}
&v_iv_{i-1}\dots v_3v_2v_1,\\
&v_iv_{i-1}\dots v_3v_2\tilde v_1,\\
&v_iv_{i-1}\dots v_3\tilde v_2\tilde v_1,\\
&\dots,\\
&v_i\tilde v_{i-1}\dots\tilde v_3\tilde v_2\tilde v_1, \text{ and}\\
&\tilde v_i\tilde v_{i-1}\dots\tilde v_3\tilde v_2\tilde v_1,
\end{align*} respectively.
\end{claim}
Instantiated for $i=k$, this claim shows that 
the relation $\mathcal{L}(A)=\mathcal{L}(T)$ 
associates a single input word $ab_{k+1}$ with $k+1$ distinct 
output words of pairwise different lengths, 
contradicting the $k$-valuedness and thus concluding 
the proof of \cref{thm:bt-necessary}.

We now briefly outline the proof of \cref{clm:induction}; the full 
proof is provided afterward. 
Proving the induction basis is trivial: For $i=0$, 
the claim coincides precisely with the situation shown 
in \cref{fig:case2} when setting $b_1=b_\indexcaseB$. 
It remains to prove the induction step.
Assuming the claim for $i-1$ as our induction hypothesis, 
our goal is to find nonempty subwords $\loo b_i$ of $b_i$ and 
$\loo v_i$ of $v_i$ such that not only transducer $T$'s 
accepting computation loops on them, 
but all of $A$'s pairwise distinct accepting computations 
on the $i$ different input-output pairs loop on the common 
input subword $\loo b_i$ without advancing the output head at all. 
This is achievable since the length $t_i$ of the subword $v_i$ in 
the output is recursively defined to be sufficiently 
large---allowing us to force transducer $T$ into producing plenty 
of output while reading the input $b_i$---and simultaneously small 
enough---ensuring that automaton $A$ has very few steps that 
advance the output head while reading $b_i$. 
\end{proof}
We now prove \cref{clm:induction} used in the proof above.
\begin{proof}[Proof of \cref{clm:induction}]
Establishing the induction basis is trivial: For $i=0$, the claim 
coincides precisely with the situation shown in \cref{fig:case2} 
when setting $b_1=b_\indexcaseB$. 
It remains to prove the induction step. 
We fix an arbitrary $i\in\{1,\dots,k\}$, assume the claim for 
$i-1$ 
as our induction hypothesis, and prove it for $i$. 
First consider $T$'s computation given by the induction hypothesis 
for $i-1$. 
Because $T$ produces at most \outputspeed output symbols when 
reading one input symbol, there are decompositions $b_i=\pre 
b_i\inf b_i\suf b_i$ and $v_i=\pre v_i\inf v_i\suf v_i$ with $|\pre 
v_i|<\outputspeed$ and $|\suf v_i|<\outputspeed$ such that $T$ has 
a computation 
 \begin{align*}&q_0\tr[1.2em]{a}{u}{}q_\indexcaseB\tr[8em]{\pre 
 b_i}{\pre vv_kv_{k-1}\dots v_{i+1}\pre v_i}\pre q_i
\tr{\inf b_i}{\inf v_i}\suf q_i\\
&\hspace*{14.8em}\tr[8em]{\suf b_i}{\suf v_i\tilde 
v_{i-1}\dots\tilde 
v_1\suf vw_\indexcaseB}f_\indexcaseB.\end{align*}
Note that \[|\inf 
v_i|>\traillength_i-2\outputspeed=\outputspeed\cdot(1+
i(\traillength_0+\traillength_1+\dots+\traillength_{i-1}))\cdot
(1+|Q_T|\cdot|Q_A|^i).\] 
Thus $\pre q_i\tr[1em]{\inf{b}_i}{\inf{v}_i}\suf q_i$ contains more 
than 
\[(1+i(\traillength_0+\traillength_1+\dots+\traillength_{i-1}))\cdot
(1+|Q_T|\cdot|Q_A|^i)\] steps during which $T$ produces nonempty 
output. 
We call the positions of $\inf b_i$ at which $T$ produces nonempty 
output during its computation \emph{$T$-productive}. 

Now consider the $i$ different computations of $A$ given by the 
induction hypothesis for $i-1$. 
They all start with a common prefix that splits into 
\[\hat 
q_0\tr[8em]{\pre aa_ka_{k-1}\dots a_{i+1}}{u\pre vv_kv_{k-1}\dots 
v_{i+1}}{}\hat q_{i+1}\tr{a_i}{v_i}{}\hat q_i.\] 
Note that in the remainder of these computations, automaton $A$ 
always scans the same input word $a_{i-1}\dots a_2a_1b_i$ but $i$ 
different output words. These output words are indeed pairwise 
different since their lengths are 
\begin{align*}
|\tilde v_i\tilde v_{i-1}\tilde v_{i-2}\dots\tilde v_3\tilde 
v_2\tilde v_1\suf 
vw_\indexcaseB|{}&<|v_i\tilde v_{i-1}\tilde v_{i-2}\dots \tilde 
v_3\tilde v_2\tilde v_1\suf 
vw_\indexcaseB|\\
&<|v_iv_{i-1}\tilde v_{i-2}\dots \tilde v_3\tilde v_2\tilde v_1\suf 
vw_\indexcaseB|\\
&<\dots\\
&<|v_iv_{i-1}v_{i-2}\dots v_3\tilde v_2\tilde v_1\suf 
vw_\indexcaseB|\\
&<|v_iv_{i-1}v_{i-2}\dots v_3v_2\tilde v_1\suf 
vw_\indexcaseB|\\
&<|v_iv_{i-1}v_{i-2}\dots v_3v_2v_1\suf vw_\indexcaseB|\\
&\le 
\traillength_0+\traillength_1+\dots+\traillength_i.
\end{align*}

Since these are $i$ different computations, each of which has at 
most $\traillength_0+\traillength_1+\dots+\traillength_{i-1}$ 
output symbols, there are within the remaining $a_{i-1}\dots 
a_2a_1b_i$ input suffix at most 
$i(\traillength_0+\traillength_1+\dots+\traillength_{i-1})$ 
positions such that, for at least one of the $i$ computations by 
$A$, the input head of $A$ stays put in said position while its 
output head advances. We call these positions \emph{$A$-productive} 
and the others \emph{$A$-unproductive}. 
It follows in particular that the subword $\inf b_i$ in $b_i=\pre 
b_i\inf b_i\suf b_i$ contains at most 
\[i(\traillength_0+\traillength_1+\dots+\traillength_{i-1})\] 
positions that are $A$-productive. 
They delimit at most 
\[1+i(\traillength_0+\traillength_1+\dots+\traillength_{i-1})\] 
sequences of consecutive $A$-unproductive positions in $\inf b_i$. 
Using the lower bound on the number of $T$-productive positions in 
$\inf b_i$ derived above, we conclude that at least one of these 
$A$-unproductive sequences contains more than 
\[\frac{(1+i(\traillength_0+\traillength_1+\dots+\traillength_{i-1}))\cdot
|Q_T|\cdot|Q_A|^i}{1+i(\traillength_0+\traillength_1+\dots+\traillength_{i-1})}=
 |Q_T|\cdot|Q_A|^i\] 
$T$-productive positions. 
Thus $\inf b_i$ has a decomposition $\pre b_i'\loo b_i\suf b_i'$ 
such that, on the one hand, all $i$ computations of $A$ given by 
the induction hypothesis for $i-1$ loop through $\loo b_i$ 
simultaneously without advancing the output head at all and, on the 
other hand, $T$'s partial computation %
on $\inf b_i$ decomposes into 
\[\pre q_i\tr{\pre b_i'}{\pre v_i'}\loo q_i\tr{\loo b_i}{\loo 
v_i}\loo q_i\tr{\suf b_i'}{\suf v_i'}\suf q_i\] 
with a loop that starts at some state $\loo q_i$ and, while reading 
the input sequence $\loo b_i$, produces some output $\loo v_i$ that 
is nonempty due to at least one $T$-productive position in $\loo 
b_i$. 
Removing the nonproductive loop $\loo b_i$ from all $i$ of $A$'s 
computations given by the induction hypothesis for $i-1$, we obtain 
all but the last of $A$'s computations in the claim for $i$. 
Moreover, deleting the loop on $\loo b_i$ %
from $T$'s computation in the induction hypothesis for $i-1$ and 
defining $b_{i+1}=\pre b_i\pre b_i'\suf b_i'\suf b_i$ and $\tilde 
v_i=\pre v_i\pre v_i'\suf v_i'\suf v_i$ yields the computation for 
$T$ in the claim for $i$. 
Finally, we use $\mathcal{L}(A)=\mathcal{L}(T)$ to obtain from this 
new computation of $T$ the missing last computation of $A$. 
All mentioned computations of $A$ start with the same prefix 
\[\hat 
q_0\tr[8em]{\pre aa_ka_{k-1}\dots a_{i+1}}{u\pre vv_kv_{k-1}\dots 
v_{i+1}}{}\hat q_{i+1}\]
since $A$ is deterministic and all input 
and output in this prefix has remained unchanged. 
This concludes the induction step and thus the proof of 
\cref{clm:induction}. 
\end{proof} 

We now provide an illustrative example of an \fnft with unbounded 
trailing. 
According to \cref{thm:bt-necessary},
it has no equivalent \ftdfa{}, implying the strictness of the 
inclusion $\mathcal{L}(\ftdfa)\subseteq \mathcal{L}(\fnft)$; see 
\cref{fig:modelcompare}.

\begin{example}\label{ex:bt}
The transducer's alphabets are $\Sigma=\Gamma=\{0,1\}$ and its 
relation is %
\[\{\,(0^i10^j10,0^i)\mid 
i,j\in\N\,\}\,\cup\,\{\,(0^i10^j11,0^j)\mid 
i,j\in\N\,\}.\] 
This union contains exactly the pairs of the form 
$(0^i10^j1\beta,0^k)$, where $\beta$ is a bit valued $0$ or $1$ and 
$k=i$ if $\beta=0$ and $k=j$ if $\beta=1$. 
This relation is easily computed by an \fnft that 
nondeterministically guesses $\beta$, then 
copies either $0^i$ or $0^j$ to the output tape, and finally 
accepts or rejects depending on whether the guess was correct. 

No \tdfa can compute this relation, however, for the following 
intuitive reason. 
Consider an input-output pair $(0^i10^j1\beta,0^k)$ with 
sufficiently large $i$, $j$, and $k$. 
By the time the input head reaches the first $1$, the output head 
has either read most of $0^k$ already or has a large part of it 
still lying ahead. 
In the first case, the automaton may have potentially checked 
whether $k=i$,
but cannot remember how many zeroes of $0^k$ the output head has 
already passed
making it impossible to check whether $k=j$ if the input ends with 
$\beta=1$.
In the second case, the automaton can potentially still check 
whether $k=j$,
but cannot remember how many zeroes the input head has already 
passed, that is, the value of $i$. 
This makes it impossible to check whether $k=i$ if the input ends 
with $\beta=0$.
Therefore, the automaton fails on inputs with $\beta=1$ in the 
first case and on inputs with $\beta=0$ in the second case.

To show formally that the described transducer has unbounded 
trailing we use the variable names of \cref{def:bt}. 
For any given $\traillength\in\N$,
we can choose $u=w_1=w_2=\lambda$, where $\lambda$ denotes the 
empty word, $a=v=0^t$, $b_1=10^t10$, and $b_2=10^t11$. 
This yields two computations showing that the trailing bound must 
be at least $\traillength$, namely 
\[q_0\tr{\raisebox{-.3ex}{$0^t$}}{\lambda0^t}q_\indexcaseA
\tr[3em]{\raisebox{-.3ex}{$10^t10$}}{\lambda}f_\indexcaseA
\text{ and }
q_0\tr{\raisebox{-.3ex}{$0^t$}}{\lambda}q_\indexcaseB
\tr[3em]{\raisebox{-.3ex}{$10^t11$}}{0^t\lambda}f_\indexcaseB.\]
\end{example}

\section{Bounded Trailing is Undecidable}\label{sec:undecidability}
In this section, we prove that determining 
whether an \fnft has bounded trailing is undecidable.
This is achieved by reducing the halting problem 
on the empty input, which is known to be undecidable,
to the problem of determining whether an \fnft has bounded trailing.
We present a reduction via a third problem, 
namely determining whether a Turing machine
reaches infinitely many configurations on the empty input. 

We begin by formally defining the standard model of 
a deterministic Turing machine with a single tape
that is unbounded in both directions. 

\begin{definition}
A \emph{Turing machine} is a sextuple $M=(Q, \Gamma, \blank, q_0, 
F, \delta)$ consisting of 
\begin{itemize}
\item a finite, nonempty \emph{set of states} $Q$,
\item a finite, nonempty \emph{alphabet} $\Gamma$,
\item a blank symbol $\blank\not\in\Gamma$,
\item an initial state $q_0\in Q$,  
\item a set $F\subseteq Q$ of accepting states, and
\item a (partial) transition function\\$\delta : Q \times 
(\Gamma\cup\{\blank\})\rightarrow Q \times \Gamma \times 
\{\textnormal{Left}, \textnormal{Right}\}$.
\end{itemize}
\end{definition}

A \emph{configuration} of a Turing machine consists of 
its current state,
the content of the tape, 
and the position of the head on the tape.
We will only consider the computations of a Turing machine 
on the empty input, 
the initial configuration thus always consists of 
the initial state~$q_0$, 
a tape containing only blank symbols, 
and the head scanning one of them. 
A configuration is called \emph{accepting} 
if its current state $q$ is accepting, i.e., if $q\in F$.
A configuration is called \emph{halting} if it is 
accepting or the transition function
is undefined for the current state $q\in Q$ and 
the symbol $a\in\Gamma$ currently
scanned by the head. 
If a configuration is not halting, the next configuration 
reached in one step of the Turing machine's computation
is obtained by updating the current state, 
writing a non-blank symbol to the tape's cell
scanned by the head, and moving the head either one cell 
to the left or one cell to the right. 

Any configuration reached during the Turing machine's computation
on the empty input
consists of a finite contiguous sequence of non-blank symbols 
and the position of the head scanning
either any symbol within this sequence or 
one of the two blank symbols delimiting it. 
Hence, we can represent a configuration of this machine
as a finite sequence of cells of two types: 
Every configuration contains exactly one cell of the first type, 
namely the one currently scanned by the head. 
In our representation, this type of cell contains 
some potentially blank symbol and the current state. 
The second type contains a non-blank symbol only. 
The initial configuration is represented by a single cell 
containing the blank symbol and the initial state---recall that 
we consider only the computation on the empty input. 

A Turing machine halts on the empty input if it reaches 
a halting configuration during its computation
starting in the initial configuration. 
The undecidability of determining 
whether a halting configuration can be reached is 
well known~\cite{T37}. 
A straightforward reduction shows
that it is also undecidable whether a given Turing machine reaches
infinitely many different configurations 
during its computation on the empty input.

\begin{lemma}\label{lem:undecidable}
The problem of determining whether a Turing machine reaches 
infinitely many different configurations during its computation 
on the empty input is undecidable.
\end{lemma}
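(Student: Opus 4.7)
The plan is to give a computable many-one reduction from the halting problem on the empty input---which is undecidable by Turing's classical result~\cite{T37}---to the complement of the property in the lemma.

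Given an arbitrary Turing machine $M$, I would construct a machine $M'$ whose tape alphabet encodes two tracks. The first track simulates $M$'s computation on the empty input step by step, while the second track stores a binary counter that $M'$ strictly increments between any two consecutive simulated steps of $M$. The machine $M'$ halts exactly when the simulated $M$ halts.

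I would then verify the following equivalence: \emph{$M$ halts on the empty input if and only if $M'$ reaches only finitely many distinct configurations}. The forward direction is immediate, as a halting deterministic computation produces only finitely many configurations. For the backward direction, suppose $M$ does not halt; then $M'$ does not halt either, and every complete simulate-and-increment cycle of $M'$ leaves the counter with a strictly larger value than before, so configurations at the end of distinct cycles are distinguished by their counter values alone, yielding infinitely many pairwise distinct configurations overall.

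This reduces the halting problem on the empty input to the property of reaching only finitely many configurations, which is therefore undecidable; since decidability is closed under complementation, reaching infinitely many configurations is undecidable as well, proving the lemma. The main conceptual obstacle is that a naive simulation without a counter would fail, because a non-halting $M$ might still cycle through only finitely many configurations; the monotone counter is precisely what forces each new cycle of $M'$ into a fresh configuration. One also needs to check that within a single simulate-and-increment cycle $M'$ does not accidentally revisit an earlier configuration, but this is automatic: since $M'$ is deterministic, any such revisit would trap it in a finite loop, contradicting the unbounded growth of the counter across later cycles.
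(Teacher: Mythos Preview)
Your proof is correct, but it takes a different route from the paper. The paper gives a \emph{Turing} reduction rather than a many-one reduction: assuming an oracle $A_\infty$ for the infinitely-many-configurations property, it builds a halting decider $A_{\text{H}}$ that first queries $A_\infty$ on $M$ itself (no modification), outputs ``No'' if the answer is ``infinitely many,'' and otherwise simulates $M$ step by step, recording all configurations, until either a halting or a previously seen configuration appears. The key observation exploited by the paper is that a deterministic machine with only finitely many reachable configurations must either halt or enter a cycle, and which of the two happens can be detected by bounded simulation once the oracle has confirmed finiteness. Your approach instead modifies the machine, adding a monotone counter on a second track so that non-halting forces infinitely many configurations regardless of whether $M$ itself cycles; this yields a genuine many-one reduction and avoids the post-oracle simulation, at the cost of the extra construction. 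Both arguments are standard and of comparable length; yours gives a formally stronger reduction, while the paper's avoids building a new machine and is perhaps more direct for a reader comfortable with Turing reductions.
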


\begin{proof}
We prove the lemma by contradiction.
Suppose that there is an algorithm $A_\infty$ that decides for 
every given Turing machine $M$ whether it reaches infinitely many 
different configurations on the empty input. 
We will use $A_\infty$ to design an algorithm $A_{\text{H}}$ that 
decides for any Turing machine $M$ whether it reaches a halting 
configuration on the empty input. 
The latter problem is known to be undecidable, yielding the desired 
contradiction.

Given a Turing machine $M$, algorithm $A_{\text{H}}$ first invokes 
$A_\infty$ to decide whether $M$ reaches infinitely many different 
configurations on the empty input. 
If it does, then $A_{\text{H}}$ outputs ``No'' because reaching a 
halting configuration implies reaching only 
finitely many configurations in total. 
Otherwise, $A_{\text{H}}$ simulates $M$'s deterministic computation 
on the empty input step by step, remembering all configurations, 
until either a halting or a previously encountered configuration is 
reached.
Then $A_{\text{H}}$ outputs ``Yes'' in the former case and ``No'' 
in the latter. 
\end{proof}

Because both the set of states and the alphabet are finite, the set
of all configurations of a bounded length is necessarily finite.
It follows that a Turing machine $M$ reaches 
infinitely many configurations if and only if 
it reaches configurations of arbitrary length.

\begin{corollary}\label{cor:undecidable}
The problem of determining whether a Turing machine reaches 
configurations of arbitrary length on the empty input 
is undecidable. 
\end{corollary}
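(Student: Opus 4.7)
The plan is to reduce the problem handled by \cref{lem:undecidable}---deciding whether a Turing machine reaches infinitely many different configurations on the empty input---to the problem stated in the corollary, namely deciding whether a Turing machine reaches configurations of arbitrary length on the empty input. Since the former is undecidable, so is the latter.

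The key observation, already foreshadowed in the text preceding the corollary, is that for any fixed length $\ell$ the set of configurations of length at most $\ell$ is finite. Indeed, a configuration is represented by a finite sequence of cells, each cell drawn from the finite set $(\Gamma\cup\{\blank\})\times Q$ for the scanned cell and from $\Gamma$ for the other cells; hence the number of configurations of length at most $\ell$ is bounded by a finite function of $\ell$, $|Q|$, and $|\Gamma|$.

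From this I would argue both directions of the equivalence. First, if the Turing machine reaches configurations of arbitrary length, then in particular it reaches configurations of unboundedly many distinct lengths, so it reaches infinitely many different configurations. Conversely, if the machine reaches infinitely many different configurations, then by the finiteness observation above no single length bound $\ell$ can contain all of them, so configurations of arbitrary length are reached. Thus the two problems coincide as decision problems on the same input, and a decider for one would yield a decider for the other.

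I do not expect any real obstacle here: the whole argument is a straightforward pigeonhole on finitely many configurations per length bound, combined with a direct reduction from \cref{lem:undecidable}. The only care needed is to make explicit that the representation of a configuration as described in the preceding paragraphs uses a finite per-cell alphabet, so that boundedness of length implies finiteness of the configuration set.
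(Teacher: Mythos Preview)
Your proposal is correct and matches the paper's own reasoning essentially verbatim: the paper derives the corollary from \cref{lem:undecidable} by the same finiteness-per-length observation, noting that reaching infinitely many configurations is equivalent to reaching configurations of arbitrary length. There is nothing to add.
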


We now show how to construct from a given deterministic 
Turing machine $M$ an \fnft{} $T$ that has unbounded trailing
if and only if $M$ reaches configurations of arbitrary length.

A valid input for $T$ is a sequence of $M$'s configurations 
followed by one of two special symbols that we call 
\emph{mode indicators}. 
The configurations are represented 
by finite sequences of cells as described in the previous section
and separated from each other by a dedicated symbol 
not occurring anywhere else. 
The two mode indicators are represented by a cell containing either 
of the two words in $\{\text{copy},\text{step}\}$. 
The transducer $T$ starts its computation by nondeterministically 
guessing the mode indicator and then 
operates in the corresponding mode described below.
If the guess turns out to be wrong or if the input is 
invalid in any way, the computation is aborted and 
the transducer rejects the input. 
The two possible types of input-output pairs after an accepting 
computation are depicted in \cref{fig:comps}.

\medskip
\begin{description}
\item[\bf Copy Mode]\hfill\\
The input is output symbol by symbol, omitting 
the mode indicator in the end.
\item[\bf Step Mode]\hfill\\
In the first step, output $M$'s entire fixed 
initial configuration $c_0$ 
while reading and remembering the first input symbol. 
Then read from the sequence in the input one configuration $c$
after the other. While reading $c$, compute and output 
the successor configuration $c'$ that $M$ reaches from $c$ 
in one computation step. 
Of course, this all assumes that such a configuration $c'$ exists;
otherwise, $T$ aborts the computation 
as it does for invalid inputs. 
\end{description}
\medskip

To see that $T$ can in fact realize these computations, 
observe that the changes necessary to turn a configuration $c$ 
into its successor configuration $c'$ do, on the one hand, 
only depend on the single type-one cell of $c$ containing 
the currently scanned symbol and the current state and, 
on the other hand, 
only affect the immediate proximity of this cell. 

Hence, the successor configuration $c'$ can indeed be computed 
from $c$ by a finite transducer that essentially is still copying 
each configuration symbol by symbol, but letting the input head 
run ahead by one cell while keeping the contents of the three 
most recent input cells in a buffer. 
This allows the transducer to modify the configuration 
in the right place to produce the successor configuration 
even for configuration changes of the Turing machine that
move its only head to the left. 
Transducer $T$ can be effectively computed from any given Turing 
machine $M$; we omit the technicalities. 

\begin{figure}[ht]
\begin{subfigure}{\linewidth}
\centering
\begin{tabular}{l}
\begin{tabular}{*3{|C{1.6em}}|C{2.4em}*2{|C{1.6em}}|}
\cline{1-3}\cline{5-6}
$c_1$&$c_2$&$c_3$&$\cdots$&$c_k$&\!copy\!\\
\cline{1-3}\cline{5-6}
\end{tabular}\\[.4ex]
\begin{tabular}{*3{|C{1.6em}}|C{2.4em}*1{|C{1.6em}}|}
\cline{1-3}\cline{5-5}
$c_1$&$c_2$&$c_3$&$\cdots$&$c_k$\\
\cline{1-3}\cline{5-5}
\end{tabular}
\end{tabular}\\[1.2em]
\begin{tabular}{l}
\begin{tabular}{*3{|C{1.6em}}|C{2.4em}*2{|C{1.6em}}|}
\cline{1-3}\cline{5-6}
$c_1$&$c_2$&$c_3$&$\cdots$&$c_k$&step\!\\
\cline{1-3}\cline{5-6}
\end{tabular}\\[.4ex]
\begin{tabular}{*3{|C{1.6em}}|C{2.4em}*2{|C{1.6em}}|}
\cline{1-3}\cline{5-6}
$c_0$&$c_1'$&$c_2'$&$\cdots$&$c_{k-1}'\!\!$&$c_k'$\\
\cline{1-3}\cline{5-6}
\end{tabular}
\end{tabular}
\caption{The two types of valid inputs for $T$ and the 
corresponding accepted outputs. Each $c_i$ with $i>0$ encodes an 
arbitrary configuration of $M$, $c_0$ is its initial configuration, 
and $c_i'$ is $c_i$'s successor configuration 
under a single computation step of $M$.
}
\label{fig:comps}
\end{subfigure}\\[1em]
\begin{subfigure}{\linewidth}
\centering
\begin{tabular}{l}
\begin{tabular}{*3{|C{1.6em}}|C{2.4em}*2{|C{1.6em}}|}
\cline{1-3}\cline{5-6}
$c_0$&$c_1$&$c_2$&$\cdots$&$c_k$&\!copy\!\\
\cline{1-3}\cline{5-6}
\end{tabular}\\[.4ex]
\begin{tabular}{*3{|C{1.6em}}|C{2.4em}*1{|C{1.6em}}|}
\cline{1-3}\cline{5-5}
$c_0$&$c_1$&$c_2$&$\cdots$&$c_k$\\
\cline{1-3}\cline{5-5}
\end{tabular}
\end{tabular}\\[1.2em]
\begin{tabular}{l}
\begin{tabular}{*3{|C{1.6em}}|C{2.4em}*2{|C{1.6em}}|}
\cline{1-3}\cline{5-6}
$c_0$&$c_1$&$c_2$&$\cdots$&$c_k$&step\!\\
\cline{1-3}\cline{5-6}
\end{tabular}\\[.4ex]
\begin{tabular}{*3{|C{1.6em}}|C{2.4em}*2{|C{1.6em}}|}
\cline{1-3}\cline{5-6}
$c_0$&$c_1$&$c_2$&$\cdots$&$c_k$&$c_{k+1}\!\!$\\
\cline{1-3}\cline{5-6}
\end{tabular}
\end{tabular}
\caption{Two accepting computation patterns of $T$ that make 
unbounded trailing inevitable if $M$ reaches arbitrarily long 
configurations. 
Here, $c_0$ is still the initial configuration of $M$ on the empty 
input, but $c_{i+1}$ is now the successor configuration of $c_i$ 
under a single computation step of $M$.}
\label{fig:bt_inst}
\end{subfigure}
\caption{Accepting computation patterns of transducer $T$, which 
depends on Turing machine $M$.}
\label{fig:undecidability}
\end{figure}

The described transducer $T$ is \emph{unambiguous}; that is, 
there is at most one accepting computation for every input word. 
This is easily checked as follows. 
On the one hand, $T$ rejects all invalid inputs anyway. 
On a valid input, on the other hand, $T$ takes only a single
nondeterministic decision to choose the operating mode and then 
the computation proceeds deterministically, 
being accepting for only one of the two choices, 
depending on the mode indicator at the end of the input word.
Transducer $T$'s being unambiguous implies that the undecidability
stated in \cref{thm:undecidable} remains valid even when we are 
restricted to unambiguous \nft{}s instead of \fnft{}s. 
Namely, unambiguity implies functionality and thus
finite-valuedness---but of course not vice versa---which trivially 
means that \cref{thm:bt-sufficient,thm:bt-necessary} hold for 
 functional and unambiguous \nft{}s too. 
Nevertheless, unambiguous transducers are just as expressive 
as functional ones~\cite{CG99}. 
We state $T$'s relevant properties
formally. 
\begin{claim}\label{clm:red-func}
$T$ is unambiguous and thus also functional and finite-valued; that 
is, $T$ is an \funnft{} and an \fnft{}.
\end{claim}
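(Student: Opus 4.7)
The plan is to directly analyze the branching structure of $T$'s computations and show that for every input word at most one accepting run exists. First I would note that $T$ rejects every input that does not conform to the specified shape (a well-formed sequence of configuration encodings followed by a cell containing either \textnormal{copy} or \textnormal{step}), so only valid inputs need to be examined.

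On any valid input, $T$ performs a single nondeterministic choice right at the beginning, namely the guess of the mode indicator; I would verify that all subsequent transitions are forced. In copy mode, the transducer simply echoes each input symbol, which is deterministic by construction. In step mode, $T$ emits the fixed initial configuration $c_0$ while consuming the first input symbol and then proceeds cell-by-cell using a fixed-size buffer over the three most recent input cells. Because $M$'s transition function $\delta$ is a partial function and each configuration contains exactly one type-one cell (carrying the current state and scanned symbol), the cell rewrite that turns $c_i$ into its successor $c_i'$ is uniquely determined; if $\delta$ is undefined at the relevant cell, $T$ aborts. Hence after the initial guess, the computation is entirely deterministic in either mode.

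The final acceptance check requires the guessed mode to match the actual mode indicator cell at the end of the input; since the two indicators are distinct symbols, at most one of the two branches survives on any given valid input. Putting the pieces together shows that $T$ has at most one accepting computation per input word, which is the definition of an unambiguous \nft{}. Unambiguity immediately yields functionality (a single accepting run produces a single output), and functionality is the special case $k=1$ of finite-valuedness, establishing that $T$ is both an \funnft{} and an \fnft{}.

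I do not expect a real obstacle here; the argument is essentially bookkeeping. The only point that deserves care is confirming that the step-mode rewrite is locally deterministic even at the boundaries between configurations and around the head cell, which is where the three-cell buffer is actually used; once that is made precise, unambiguity follows.
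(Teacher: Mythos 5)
Your argument is correct and matches the paper's own proof essentially verbatim: invalid inputs are rejected, the only nondeterministic choice on a valid input is the initial mode guess, the rest of the computation is deterministic in either mode, and only the branch matching the trailing mode indicator can accept, so unambiguity follows and implies functionality and finite-valuedness. The extra detail you give on why step mode is deterministic (determinism of $M$'s partial transition function and the uniqueness of the local rewrite around the head cell) is a welcome elaboration of what the paper leaves implicit.
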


We will now sketch the proof of the crucial connection between 
the length of $M$'s configurations and $T$'s trailing. 

\begin{claim}\label{clm:bt-undec}
 $T$ has unbounded trailing
if and only if Turing machine $M$ reaches configurations of 
arbitrary length during its computation on the empty input.
\end{claim}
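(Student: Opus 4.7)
The plan is to prove the biconditional in two directions, leveraging the accepting computation patterns of $T$ depicted in Figure~\ref{fig:bt_inst}.

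For the ``if'' direction, I will assume $M$ reaches configurations of arbitrary length on the empty input and let $c_0, c_1, c_2, \dots$ denote the successive configurations reached. Given any trailing bound $\traillength \in \N$, I will choose $k$ such that $|c_{k+1}| > \traillength$, set the common input prefix $a = c_0 c_1 \cdots c_k$, and exhibit two accepting computations of $T$: the step-mode computation on input $a \cdot \text{step}$, with total output $c_0 c_1 \cdots c_{k+1}$, and the copy-mode computation on input $a \cdot c_{k+1} \cdot \text{copy}$, whose total output is also $c_0 c_1 \cdots c_{k+1}$. The key observation is that after reading only $a$, the step-mode computation has already produced (essentially all of) $c_0 c_1 \cdots c_{k+1}$---possible because $|c_{i+1}| \le |c_i| + 1$, so $T$'s output speed can accommodate emitting each successor $c_{i+1}$ while reading $c_i$---whereas the copy-mode computation has output only $c_0 c_1 \cdots c_k$. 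Setting $u = c_0 c_1 \cdots c_k$ and $v = c_{k+1}$ in Definition~\ref{def:bt}, with $b_1 = \text{step}$, $b_2 = c_{k+1} \cdot \text{copy}$, and $w_1 = w_2 = \lambda$, witnesses trailing $|v| = |c_{k+1}| > \traillength$ (absorbing $T$'s constant buffering lag into $\traillength$ if needed).

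For the converse, I will prove the contrapositive: if all configurations $M$ reaches have length at most some $N$, then $T$ has bounded trailing. By \cref{clm:red-func}, $T$ is unambiguous, so each accepting computation on a given input is uniquely determined by the mode indicator at the end; any two accepting computations on inputs sharing a prefix $a$ therefore reduce to one of four mode combinations. The (copy,~copy) and (step,~step) cases yield trailing~$0$ because the two computations produce identical outputs on $a$. In the (copy,~step) case, step mode's output on $a$ lags copy mode's output by at most the buffering needed to determine the current successor, itself bounded by $N$ plus $T$'s output speed. The (step,~copy) case is the main one: step mode prepends $M$'s fixed initial configuration and then outputs the successors of the input configurations, so copy mode's output $a$ can be a prefix of step mode's output only when $a$ coincides cell-by-cell with a true initial segment $c_0 c_1 \cdots c_j$ of $M$'s trajectory from the initial configuration, in which case the excess is precisely the next configuration $c_{j+1}$ and has length at most $N$.

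The main obstacle will be the (step,~copy) case. I expect it to require a careful cell-by-cell comparison: if the input configurations $c_0, c_1, \dots, c_j$ in $a$ ever deviate from $M$'s actual successor chain, then the corresponding cell of the step-mode output (which is part of the successor $c_{i-1}'$) differs from the corresponding cell of the copy-mode output (which is part of $c_i$), breaking the prefix relationship that Definition~\ref{def:bt} requires between $u$ and $uv$. Once this pins $a$ down to a genuine trajectory prefix, $|v|$ equals the length of one further reachable configuration, hence at most $N$, closing the contrapositive and establishing the claim.
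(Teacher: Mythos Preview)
Your proposal is correct and follows essentially the same approach as the paper's proof. The paper argues both directions at once by showing that whenever the two modes' outputs are in a prefix relationship, the configurations in the common prefix are forced (by the step-mode's prepending of $c_0$ and the cell-by-cell matching you describe) to be an initial segment of $M$'s actual trajectory, and that the trail length then equals the length of the configuration currently being read up to a constant; your explicit four-way case split and separate treatment of the two implications amount to the same argument unpacked, with the (step,~copy) case carrying the weight exactly as you identify.
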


\begin{proof}
The transducer $T$ takes only one nondeterministic decision 
during any computation, namely to operate in either copy or 
step mode, the rest of the computation is deterministic. 
According to \cref{def:bt}, the only way for any trailing 
to occur is therefore a pair of two computations, 
one in copy mode and one in step mode, 
producing consistent output words. 
One of these two output words must be a prefix of the other; 
we call it the common prefix of the two computations. %
See \cref{fig:bt_inst} for an example of the arising situation.

In the following paragraph we argue why the configurations 
within the common prefix represent a valid computation of $M$ and 
why the current trail length is equal to the length of the 
configuration currently read, up to a constant. 

Since transducer $T$ always starts by outputting 
the initial configuration $c_0$ in step mode, t
his has to be the first configuration on the output tape 
in copy mode as well. 
In copy mode, $T$ can only write this configuration $c_0$ 
to the beginning of the output tape 
if $c_0$ is also the first configuration on the input tape. 
The step mode behavior now ensures that the second configuration on 
the output tape is the successor of $c_0$, we call it $c_1$. 
Iterating this argument, we see that the common prefix 
indeed contains a valid computation $c_0,c_1,\dots,c_k$, 
where each configuration is the successor of the previous one 
under one computation step of $M$.
Finally, the trail is always as long as the configuration 
that is currently being read, up to the size of the one cell 
by which the input head is leading. 
Thus the current trail length is indeed always, up to constant, 
equal to the length of the current configuration in 
the simulation of the computation of $M$ on the empty input word. 
This concludes the proof of \cref{clm:bt-undec}.
\end{proof}

Finally, we obtain the main result of this section by combining 
\cref{cor:undecidable,clm:red-func,clm:bt-undec}. 

\begin{theorem}\label{thm:undecidable}
Whether a given \fnft{} has bounded trailing is 
an undecidable problem.
\end{theorem}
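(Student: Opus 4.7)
The plan is to assemble the pieces already in place and reduce from a known undecidable problem. Specifically, by \cref{cor:undecidable}, it is undecidable whether a given deterministic Turing machine $M$ reaches configurations of arbitrary length on the empty input. I will argue that a decision procedure for bounded trailing of an \fnft{} would yield a decision procedure for this problem, contradicting \cref{cor:undecidable}.

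The reduction proceeds via the construction described in this section. From an arbitrary Turing machine $M$, one effectively builds the nondeterministic transducer $T$ that guesses a mode indicator (copy or step) and, depending on the guess, either copies the input configuration sequence verbatim or interprets it as consecutive configurations of $M$, outputting the pointwise successor configurations prefixed by the fixed initial configuration $c_0$. The explicit construction in the paragraph preceding \cref{fig:comps} shows that this transducer can be built effectively from $M$ using only a constant-size window around the head cell.

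Once $T$ has been produced, \cref{clm:red-func} guarantees that $T$ is an \fnft{}, so the question ``does $T$ have bounded trailing?'' is an instance of the decision problem in the statement. Moreover, \cref{clm:bt-undec} establishes the equivalence between $T$'s having unbounded trailing and $M$'s reaching configurations of arbitrary length on the empty input. Composing these two facts with a hypothetical decider for bounded trailing gives a decider for the property in \cref{cor:undecidable}, which is impossible.

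The only remaining obstacle is verifying that the reduction is computable, which is immediate: $T$'s states, alphabet, and transitions are determined by a uniform construction from $M$'s finite description, so the map $M\mapsto T$ is effective. Hence determining whether a given \fnft{} has bounded trailing is undecidable.
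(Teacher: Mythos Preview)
Your proposal is correct and matches the paper's own proof essentially line for line: both argue by contradiction, assuming a decider for bounded trailing, applying the effective construction $M\mapsto T$, invoking \cref{clm:red-func} to ensure $T$ is an \fnft{}, and then using \cref{clm:bt-undec} together with \cref{cor:undecidable} to derive the contradiction. There is no meaningful difference in approach.
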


\begin{proof}
We prove the theorem by contradiction.
Suppose that there is an algorithm $A$ deciding whether an \fnft{} 
has bounded trailing.
We construct an algorithm $A_\infty$ deciding whether a Turing 
machine $M$ reaches configurations of arbitrary length.
The latter problem is undecidable by \cref{cor:undecidable}, 
yielding a contradiction.

Given a Turing machine $M$, algorithm $A_\infty$ constructs the 
\fnft{} $T$
and uses algorithm $A$ to decide whether $T$ has bounded trailing.
If it does, then $A_\infty$ outputs ``No,'' otherwise it outputs 
``Yes.''
The correctness of algorithm $A_\infty$ follows from 
\cref{clm:bt-undec}.
\end{proof}

\section{Conclusion}
We have introduced the notion of bounded trailing, characterized 
\fnft{}s having 
an equivalent \ftdfa{} as those with bounded trailing, proved that 
it is undecidable whether an \fnft has bounded trailing---thereby 
proving \fnft{}s 
to be more powerful than \ftdfa{}s---and showed 
how to construct an equivalent \ftdfa given a 
trailing bound. 
Furthermore, all of these results hold true for functional and 
unambiguous transducers as well because 
\cref{thm:bt-sufficient,thm:bt-necessary} prove the 
characterization for any \fnft---including functional and 
unambiguous finite transducers---and because we have shown our 
criterion's undecidability using an unambiguous finite 
transducer, which is the most restrictive. 

On the one hand, our results mirror Choffrut's classical 
characterization of 
determinizable \nft{}s as those with the twinning 
property~\cite{CH79}. 
One the other hand, the undecidability of our bounded-trailing 
criterion contrasts with the twinning property being decidable in 
polynomial 
time~\cite{WK95}. 
Despite this undecidability, we provide a 
uniform construction that transforms an \fnft{} into an \ftdfa{} 
whenever possible. 

Fischer and Rosenberg~\cite{FR68} have shown already more than half 
a century ago 
that deciding whether a $\textrm{2t-}$\nfa 
(or equivalently, a \lnft) 
can be transformed 
into an equivalent \tdfa is an undecidable problem; 
however, their proof requires relations that are not finite-valued 
and thus does not yield any result for $\fnft$s. 
Moreover, neither a characterization of transducers with an 
equivalent deterministic automaton nor a method to construct 
an equivalent deterministic automaton if it exists 
has been described so far.  
The present paper addresses all of these issues for the case of 
\fnft{}s.

We emphasize again that the bounded trailing property defined in
this paper does not characterize \nft{}s having an equivalent \tdfa
unless the relation is finite-valued; 
a corresponding counterexample of a transducer with both an 
equivalent \tdfa{} and unbounded trailing is found in 
\cref{ex:counterexample}. 
Finding a reasonable characterization of \nft{}s 
with an equivalent \tdfa{} and a method to transform the former 
into the latter if possible remains as an open research opportunity.

\section*{Acknowledgments}
We are grateful to Juraj Hromkovi\v{c} and Richard Kr\'{a}lovi\v{c} 
for inspiring discussions, and 
thank the anonymous reviewers for their careful 
reading of this paper and suggestions. 

This research is supported by the Swiss National Science Foundation 
grant ``Big Data Monitoring''(167162). 

The authors are listed in alphabetical order.
\newpage

\bibliographystyle{plainurl}
\bibliography{bibliography}

\begin{thebibliography}{10}

\bibitem{AHU69}
A.{\,}V. Aho, J.{\,}E. Hopcroft, and J.{\,}D. Ullman.
\newblock A general theory of translation.
\newblock {\em Math. Systems Theory}, 3:193--221, 1969.

\bibitem{formalization}
E.~Burjons, F.~Frei, and M.~Raszyk.
\newblock Formalization associated with this paper, 2021.
\newblock \url{https://github.com/lics21-automata/automata}.

\bibitem{C12}
O.~Carton.
\newblock Two-way transducers with a two-way output tape.
\newblock In {\em Developments in Language Theory}, pages 263--272. Springer
  Berlin Heidelberg, 2012.

\bibitem{CES17}
O.~Carton, L.~Exibard, and O.~Serre.
\newblock Two-way two-tape automata.
\newblock In {\em Developments in Language Theory - 21st International
  Conference, {DLT} 2017}, volume 10396 of {\em Lecture Notes in Computer
  Science}, pages 147--159. Springer, 2017.

\bibitem{CH77}
C.~Choffrut.
\newblock Une caracterisation des fonctions sequentielles et des fonctions
  sous-sequentielles en tant que relations rationnelles.
\newblock {\em Theoretical Computer Science}, 5(3):325--337, 1977.

\bibitem{CH79}
C.~Choffrut.
\newblock A generalization of {Ginsburg} and {Rose}'s characterization of
  {G-S-M} mappings.
\newblock In {\em Automata, Languages and Programming}, pages 88--103. Springer
  Berlin Heidelberg, 1979.

\bibitem{CG99}
C.~Choffrut and S.~Grigorieff.
\newblock Uniformization of rational relations.
\newblock In {\em Jewels are Forever, Contributions on Theoretical Computer
  Science in Honor of Arto Salomaa}, pages 59--71. Springer, 1999.

\bibitem{EM65}
C.{\,}C. Elgot and J.{\,}E. Mezei.
\newblock On relations defined by generalized finite automata.
\newblock {\em IBM J. Res. Dev.}, 9(1):47--68, 1965.

\bibitem{EH99}
J.~Engelfriet and H.~J. Hoogeboom.
\newblock Two-way finite state transducers and monadic second-order logic.
\newblock In {\em Automata, Languages and Programming}. Springer, 1999.

\bibitem{FGRS13}
E.~Filiot, O.~Gauwin, P.-A. Reynier, and F.~Servais.
\newblock From two-way to one-way finite state transducers.
\newblock In {\em {LICS} 2013}, pages 468--477. {IEEE} Computer Society, 2013.

\bibitem{FR68}
P.{\,}C. Fischer and {A.{\,}L.} Rosenberg.
\newblock Multitape one-way nonwriting automata.
\newblock {\em Journal of Computer and System Sciences}, 2(1):88--101, 1968.

\bibitem{HKM11}
M.~Holzer, M.~Kutrib, and A.~Malcher.
\newblock Complexity of multi-head finite automata: Origins and directions.
\newblock {\em Theoretical Computer Science}, 412(1):83--96, 2011.

\bibitem{NPW02}
T.~Nipkow, L.{\,}C. Paulson, and M.~Wenzel.
\newblock {\em Isabelle/HOL: {A} Proof Assistant for Higher-Order Logic},
  volume 2283 of {\em Lecture Notes in Computer Science}.
\newblock Springer, 2002.

\bibitem{RS59}
M.{\,}O. Rabin and D.~Scott.
\newblock Finite automata and their decision problems.
\newblock {\em {IBM} Journal of Research and Development}, 3(2):114--125, 1959.

\bibitem{RBT19}
M.~Raszyk, D.~A. Basin, and D.~Traytel.
\newblock From nondeterministic to multi-head deterministic finite-state
  transducers.
\newblock In {\em {ICALP} 2019}, volume 132 of {\em LIPIcs}, pages
  127:1--127:14, 2019.

\bibitem{RS91}
C.~Reutenauer and M.-P. Sch{\"{u}}tzenberger.
\newblock Minimization of rational word functions.
\newblock {\em {SIAM} J. Comput.}, 20(4):669--685, 1991.

\bibitem{S75}
M.{\,}P. Schützenberger.
\newblock Sur les relations rationnelles.
\newblock In {\em Automata Theory and Formal Languages}, pages 209--213,
  Berlin, Heidelberg, 1975.

\bibitem{Su75}
I.{\,}H. Sudborough.
\newblock On tape-bounded complexity classes and multihead finite automata.
\newblock {\em Journal of Computer and System Sciences}, 10(1):62--76, 1975.

\bibitem{T37}
A.{\,}M. Turing.
\newblock On computable numbers, with an application to the
  entscheidungsproblem.
\newblock {\em Proceedings of the London Mathematical Society},
  s2-42(1):230--265, 1937.

\bibitem{W89}
A.~Weber.
\newblock On the valuedness of finite transducers.
\newblock {\em Acta Informatica}, 27(8):749--780, 1989.

\bibitem{WK95}
A.~Weber and R.~Klemm.
\newblock Economy of description for single-valued transducers.
\newblock {\em Information and Computation}, 118(2):327--340, 1995.

\bibitem{YR78}
A.{\,}C. Yao and R.{\,}L. Rivest.
\newblock $k+1$ heads are better than $k$.
\newblock {\em J. ACM}, 25(2):337--340, 1978.

\end{thebibliography}

\end{document}